\definecolor{dark-red}{rgb}{0.4,0.15,0.15}
\definecolor{dark-blue}{rgb}{0.15,0.15,0.4}
\definecolor{medium-blue}{rgb}{0,0,0.5}
\definecolor{gray}{rgb}{0.5,0.5,0.5}
\newcommand{\apolytopematrices}{PM}
\newcommand{\R}{\mathbb{R}}
\newcommand{\completegraph}{K}
\newcommand{\discretegraph}{D}
\newcommand{\parsetree}{t}
\newcommand{\leafposition}{p}
\newcommand{\neighborhood}{N}
\newcommand{\closedneighborhood}{\overline{N}}
\newcommand{\avertex}{v}
\newcommand{\anothervertex}{u}
\newcommand{\vertexset}{V}
\newcommand{\cosetrepresentatives}{T}
\newcommand{\subtreelikeset}{M}
\newcommand{\eyield}{\yield_{\enrichmentvtx}}
\newcommand{\treewidth}{k}
\newcommand{\pathwidth}{k}
\newcommand{\decwidth}{k}
\newcommand{\cLang}{\pi}
\newcommand\sym{\symmetricgroup}
\newcommand{\annotatedbags}{\mathfrak{B}}
\newcommand{\anotherpolytope}{Q}
\newcommand{\setvectors}{W}
\newcommand{\setinequalities}{\mathcal{E}}
\newcommand{\variables}{\mathcal{B}}
\newcommand{\grammarrelation}{R}
\newcommand{\initialvariable}{\avariable_1}
\newcommand{\avariable}{B}
\newcommand{\allgrammars}{\mathbb{G}}
\newcommand{\allregulargrammars}{\mathbb{R}\mathbb{G}}
\newcommand{\stringfy}[1]{\mathit{str}(#1)}
\newcommand{\indexgroup}{\mathcal{I}}
\newcommand\footnoteref[1]{\protected@xdef\@thefnmark{\ref{#1}}\@footnotemark}
\newcommand{\realvariables}{\mathcal{X}}
\newcommand{\otherrealvariables}{\mathcal{Y}}
\newcommand{\avector}{v}
\newcommand{\treelikeset}{U}
\newcommand{\anotherpermutation}{\beta}
\newcommand{\agroup}{G}
\newcommand{\anothergroup}{H}
\newcommand{\subgroup}{\sqsubseteq}
\newcommand{\asubset}{S}
\newcommand{\defineequal}{\stackrel{\mathit{def}}{=}}
\newcommand{\astring}{w}
\newcommand{\asymbol}{a}
\newcommand{\polytope}{P}
\newcommand{\aposition}{p}
\newcommand{\agraph}{X}
\newcommand{\anothergraph}{Y}
\newcommand{\arity}{\mathfrak{a}}
\newcommand{\maximumdegree}{\Delta}
\newcommand{\stringfication}{\mathrm{str}}
\newcommand{\convexhull}{\mathrm{conv}}
\newcommand{\extensioncomplexity}{\mathrm{xc}}
\newcommand{\embeddingcomplexity}{\mathrm{gec}}
\newcommand{\regsgc}{\mathrm{reg\mbox{-}sgc}}
\newcommand{\sgc}{\mathrm{sgc}}
\newcommand{\contextfreegrammar}{\mathfrak{G}}
\newcommand\cfg{\contextfreegrammar}
\newcommand{\yield}{\mathrm{yield}}
\newcommand{\grammar}{\mathfrak{G}}
\newcommand{\apolytope}{P}
\newtheorem{theorem}{Theorem}
\newtheorem{definition}[theorem]{Definition}
\newtheorem{lemma}[theorem]{Lemma}
\newtheorem{corollary}[theorem]{Corollary}
\newtheorem{observation}[theorem]{Observation}
\newtheorem{proposition}[theorem]{Proposition}
\newtheorem{claim}[theorem]{Claim}
\newtheorem{problem}[theorem]{Problem}
\newtheorem{myremark}[theorem]{Remark}
\newtheoremstyle{named}{}{}{\itshape}{}{\bfseries}{.}{.5em}{Restatement of #1 \thmnote{#3 }}
\theoremstyle{named}
\newtheorem*{retheorem}{Theorem}
\newtheorem*{reremark}{Remark}
\providecommand{\keywords}[1]{\textbf{\textrm{Keywords.}} #1}
\newcommand{\mylang}{{\mathcal{L}}} 
\newcommand{\Nplus}{{\mathbb{N}_+}}
\newcommand{\treedecomposition}{\mathbf{t}}
\newcommand{\td}{\treedecomposition}
\newcommand{\annotatedtreedecomposition}{\hat{\treedecomposition}}
\newcommand{\etd}{\annotatedtreedecomposition}
\newcommand{\reals}{\mathbb{R}}
\newcommand{\leaves}{\mathit{leaves}}
\newcommand{\emptysymbol}{\varepsilon}
\newcommand{\alphabet}{\Sigma}
\newcommand{\composedUprime}{%
  \mathrel{\vbox{\offinterlineskip\ialign{%
    \hfil##\hfil\cr
    $\scriptscriptstyle\circ$\cr
    \noalign{\kern0.1ex}
    $\boldU'$\cr
}}}}
\newcommand{\composedU}{%
  \mathrel{\vbox{\offinterlineskip\ialign{%
    \hfil##\hfil\cr
    $\scriptscriptstyle\circ$\cr
    \noalign{\kern0.1ex}
    $\boldU$\cr
}}}}
\newcommand{\composedW}{%
  \mathrel{\vbox{\offinterlineskip\ialign{%
    \hfil##\hfil\cr
    $\scriptscriptstyle\circ$\cr
    \noalign{\kern0.1ex}
    $\boldW$\cr
}}}}
\newcommand{\composedD}{%
  \mathrel{\vbox{\offinterlineskip\ialign{%
    \hfil##\hfil\cr
    $\scriptscriptstyle\circ$\cr
    \noalign{\kern0.1ex}
    $\boldD$\cr
}}}}
\DeclareMathAlphabet{\mathcalligra}{T1}{calligra}{m}{n}
\newcommand*{\largerdot}{\raisebox{-0.25ex}{\scalebox{1.5}{$\cdot$}}}
\newcommand{\morphismvertices}{%
  \mathrel{\vbox{\offinterlineskip\ialign{%
    \hfil##\hfil\cr
    $\scriptscriptstyle\largerdot$\cr
    \noalign{\kern-0.3ex}
    $\mu$\cr
}}}}
\newcommand{\edgeset}{E}
\newcommand{\boldD}{\mathbf{D}}
\newcommand{\boldW}{\mathbold{W}}
\newcommand{\boldU}{{\mathbf{U}}}
\newcommand{\lang}{L} 
\newcommand{\bN}{\mathbb{N}}
\newcommand{\cO}{\mathcal{O}}
\newcommand*{\defeq}{\mathrel{\vcenter{\baselineskip0.5ex \lineskiplimit0pt
                     \hbox{\scriptsize.}\hbox{\scriptsize.}}}%
                     =}
\newenvironment{claimproof}{\begin{proof}\renewcommand{\qedsymbol}{\claimqed}}{\end{proof}\renewcommand{\qedsymbol}{\plainqed}}
\let\plainqed\qedsymbol
\newcommand{\card}[1]{|#1|}
\newcommand{\permutation}{\alpha}
\newcommand{\myterms}{\mathit{Ter}}
\newcommand{\positions}{\mathit{Pos}}
\newcommand{\emptystring}{{\lambda}}
\newcommand{\symmetricgroup}{\mathbb{S}}
\newcommand{\isomorphisms}{\mathop{\mathrm{Iso}}}
\newcommand{\automorphisms}{\mathop{\mathrm{Aut}}}
\newcommand{\aterm}{t}
\newcommand{\annotatedterm}{\hat{\aterm}}
\newcommand{\eterm}{\annotatedterm}
\newcommand{\productionrules}{\mathcal{R}}
\newcommand{\alternatinggroup}{\mathbb{A}}
\newcommand{\poormap}{\rho}
\newcommand{\graphclass}{\mathcal{X}}
\newcommand{\powerset}{\mathcal{P}}
\newcommand{\stab}{\mathrm{stab}}
\newcommand{\apermutation}{\alpha}
\newcommand\aperm{\apermutation}
\newcommand{\permutedcoordinates}{\mathrm{Perm}}
\newcommand\enrichmentvtx{\nu}
\newcommand\evtx{\enrichmentvtx}
\newcommand\enrichmentautomorphism{\mu}
\newcommand\eaut{\enrichmentautomorphism}
\newcommand{\blang}{L}
\title{Compressing Permutation Groups into Grammars and Polytopes. \\ A Graph Embedding Approach%
\thanks{Lars Jaffke acknowledges support from the Bergen Research Foundation. Mateus de Oliveira Oliveira
acknowledges support from the Bergen Research Foundation and from the Research Council of Norway (Project Number 288761). 
Hans Raj Tiwary is partially supported by the grant GA\v{C}R 17-09142S.}}
\author{
Lars Jaffke$^{1}$ \hspace{0.5cm} Mateus de Oliveira Oliveira$^{1}$ \hspace{0.5cm} Hans Raj Tiwary$^{2}$ \\
\\ 
$^1$Department of Informatics, University of Bergen, Bergen, Norway \\ \{lars.jaffke,mateus.oliveira\}@uib.no \\  
$^2$Department of Applied Mathematics, Charles University, Prague, Czech Republic \\ {hansraj@kam.mff.cuni.cz}\\
}
\begin{document}

\maketitle

\begin{abstract} 
It can be shown that each permutation group $\agroup \subgroup \symmetricgroup_n$
can be embedded, in a well defined sense, in a {\em connected} graph
with $O(n+|G|)$ vertices. Some groups, however, require much fewer vertices.
For instance, $\symmetricgroup_n$ itself can be embedded in the $n$-clique $K_n$,
a connected graph with $n$ vertices.

In this work, we show that the minimum size of a context-free grammar generating a finite permutation 
group $\agroup\subgroup \symmetricgroup_n$ can be upper bounded by three structural parameters of {\em connected} graphs 
embedding $\agroup$: the number of vertices, the treewidth, and the maximum degree.
More precisely, we show that any permutation group $\agroup \subgroup \symmetricgroup_n$ that can be embedded 
into a connected graph with $m$ vertices, treewidth $\treewidth$, and maximum degree $\maximumdegree$,
can also be generated by a context-free grammar of size $2^{O(\treewidth\maximumdegree\log\maximumdegree)}\cdot m^{O(\treewidth)}$.
By combining our upper bound with a connection established by Pesant, Quimper, Rousseau and Sellmann \cite{Pesant2009} 
between the extension complexity of a permutation group and the grammar complexity of a formal language, we also 
get that these permutation groups can be represented by polytopes of extension complexity 
$2^{O(\treewidth\maximumdegree\log\maximumdegree)}\cdot m^{O(\treewidth)}$. 

The above upper bounds can be used to provide trade-offs between the index of permutation groups, and the number of 
vertices, treewidth and maximum degree of connected graphs embedding these groups. 
In particular, by combining our main result with a celebrated $2^{\Omega(n)}$ lower bound on the grammar 
complexity of the symmetric group $\symmetricgroup_n$ due to Glaister and Shallit \cite{GlaisterShallit1996}
we have that connected graphs of treewidth $o(n/\log n)$ and maximum degree $o(n/\log n)$ embedding subgroups 
of $\symmetricgroup_n$ of index $2^{cn}$ for some small constant $c$ must have $n^{\omega(1)}$
vertices. This lower bound can be improved to exponential on graphs of treewidth $n^{\varepsilon}$ for $\varepsilon<1$ and 
maximum degree $o(n/\log n)$. 
\\
\\
\keywords{Permutation Groups, Context Free Grammars, Extension Complexity, Graph Embedding Complexity} 
\end{abstract}

\newpage
\section{Introduction}
\label{section:Introduction}

Let $\symmetricgroup_n$ be the set of permutations of the set $\{1,...,n\}$
and $\stringfication(\symmetricgroup_n)$ be the set of strings in $\{1,...,n\}^n$ encoding
permutations in $\symmetricgroup_n$. The search for minimum size grammars generating the language 
$\stringfication(\symmetricgroup_n)$ has sparked a lot of interest in the automata theory and in the 
complexity theory communities, both in the study of lower bounds \cite{Ellul2004,LovettShallit2019,Filmus2011lower},
and in the study of upper bounds \cite{gruber2018minimal,asveld2008generating,asveld2006generating}. 
In particular, a celebrated result due to Ellul, Krawetz and Shallit \cite{Ellul2004} states that any 
context-free grammar generating the language $\stringfication(\symmetricgroup_n)$ must have size $2^{\Omega(n)}$.
In this work we complement this line of research by establishing upper bounds for the size of context-free grammars 
representing a given subgroup $\agroup \subgroup \symmetricgroup_n$. These upper bounds are stated 
in terms of three structural parameters of connected graphs embedding $\agroup$: the number 
of vertices, the treewidth and the maximum degree.

We say that a permutation group $\agroup\subgroup \symmetricgroup_n$ can be embedded in
a graph $\agraph$ with vertex set $[m]=\{1,...,m\}$, if $m\geq n$ 
and $\agroup$ is equal to the restriction of the automorphism group of $\agraph$ 
to its first $n$ vertices $[n]=\{1,...,n\}$.
A more precise definition of the notion of graph embedding is given in Section \ref{section:GraphEmbeddable}. 
For a given class of {\em connected} graphs $\graphclass$, the $\graphclass$-embedding complexity of $\agroup$, 
denoted by $\embeddingcomplexity_{\graphclass}(\agroup)$, is defined as the minimum $m$ such that $\agroup$ 
can be embedded in an $m$-vertex graph $\agraph\in\graphclass$. 

Given an alphabet $\alphabet$, the {\em symmetric grammar complexity} (SGC) of a formal language $\lang\subseteq \alphabet^n$ measures 
the minimum size of a context-free grammar accepting a permuted version of $\lang$. As a matter of comparison, we note that online Turing machines working in space $s$ and with access to a 
stack have symmetric grammar complexity $2^{O(s)}$ \cite{GoldwurmPalanoSantini2001circuit}. 
In this setting, the machine reads the input string $\astring\in \alphabet^n$ from left to right, one symbol at a time. 
While reading this string, symbols can be pushed into or popped from the stack. The transitions relation depends 
on the current state, on the symbol being read at the input, and on the symbol being read at the top of the stack.
The caveat is that the number of symbols used in the stack (which can be up to $n$) is not counted in the space bound $s$, which can be 
much smaller than $n$ (say $s=O(\log n)$). 
The SGC of a language $\lang\subseteq \alphabet^{n}$ is also polynomially related to the minimum size of a read-once branching program with a stack
accepting $\lang$ (see for instance \cite{Mengel2013arithmetic}).

\subsection{Our Results}

We show that the automorphism group of any graph with $n$ vertices, maximum degree $\maximumdegree$ and treewidth $\treewidth$
has symmetric grammar complexity at most $2^{O(\treewidth\maximumdegree\log\maximumdegree)}\cdot n^{O(\treewidth)}$ 
(Theorem \ref{theorem:AutomorphismsTreewidth}). More generally, we show that the SGC of groups that can be embedded in
$m$-vertex graphs of maximum degree $\maximumdegree$ and treewidth $\treewidth$ is at most 
$2^{O(\treewidth\maximumdegree\log\maximumdegree)}\cdot m^{O(\treewidth)}$ (Theorem \ref{theorem:MainTheoremEmbeddableGroup}). 

In linear programming theory, it can be shown that there are interesting polytopes 
$\apolytope \subseteq \R^n$, which can only be defined with an exponential (in $n$) number 
of inequalities, but which can be cast as a linear projection of a higher dimensional 
polytope $\anotherpolytope$ that can be defined with polynomially many variables and constraints. 
Such a polytope $\anotherpolytope$ is called an extended formulation of $\apolytope$.
Extended formulations of polynomial size play a crucial role in combinatorial optimization
because they provide an unified framework to obtain polynomial time algorithms for a large 
variety of combinatorial problems. For this reason, extended formulations of polytopes associated
with formal languages and with groups have been studied intensively during the past decades, 
both from the perspective of lower bounds \cite{Rothvoss2013,FioriniMassarPokuttaTiwaryWolf2015,Yannakakis1991expressing,Pokutta2013note,Avis2015extension,
Braun2015approximation,KaibelWeltge2015short}, and from the perspective of upper bounds
\cite{Barahona1993cuts,DezaLaurent1997geometry,BalasPulleyblank1983perfectly,Barahona1993cuts,Pulleyblank1993formulations,Faenza2009extended,Yannakakis1991expressing,Cheung2003subtour,Conforti2009network}. 

By combining our main theorem \ref{theorem:MainTheoremEmbeddableGroup} with a connection established by
Pesant, Quimper, Rousseau and Sellmann \cite{Pesant2009}  between the extension complexity of a 
permutation group and the grammar complexity of a formal language, we show that any permutation group
that can be embedded in a connected graph with $m$ vertices, treewidth $\treewidth$, 
and maximum degree $\maximumdegree$ can be represented by polytopes of extension complexity 
$2^{O(\treewidth\maximumdegree\log\maximumdegree)}\cdot m^{O(\treewidth)}$ (Theorem \ref{theorem:EmbeddabilityVsPolytopes}). 

By combining our upper bound from Theorem \ref{theorem:MainTheoremEmbeddableGroup} with the $2^{\Omega(n)}$ lower bound from \cite{Ellul2004},
we obtain an interesting complexity theoretic trade-off relating the index of a permutation group with the size, treewidth and maximum degree 
of a graph embedding this group (Theorem \ref{theorem:TradeoffTreewidth}). As a corollary of this trade-off, we show that subgroups 
of $\symmetricgroup_n$ with index up to $2^{cn}$ for some small constant $c$ have superpolynomial graph embedding complexity on classes 
of graphs with treewidth $o(n/\log n)$ and maximum degree $o(n/\log n)$ (Corollary \ref{corollary:LowerBoundOne}). 
Additionally, this lower bound can be improved from super-polynomial to exponential on classes of graphs of 
treewidth $n^{\varepsilon}$ (for $\varepsilon< 1$) and maximum degree $o(n/\log n)$ (Corollary \ref{corollary:LowerBoundTwo}). 
In particular, Corollary \ref{corollary:LowerBoundTwo} implies exponential lower bounds for minor-closed families of connected graphs (which have treewidth $\sqrt{n}$).

\subsection{Related Work}

Proving lower bounds for the size of graphs embedding a given permutation group is a challenging
and still not well understood endeavour. It is worth noting that it is still not known whether 
the alternating group $\alternatinggroup_n$ can be embedded in a graph with $n^{O(1)}$ vertices.
We note that by solving an open problem stated by Babai in \cite{Babai1981abstract}, Liebeck has shown that 
any graph whose automorphism group is isomorphic to the alternating group (as an abstract group) must have
at least $2^{\Omega(n)}$ vertices \cite{Liebeck1983graphs}. Nevertheless, a similar result has not yet been 
obtained in the setting of graph embedding of groups, and indeed, constructing an explicit sequence of groups 
that have superpolynomial graph embedding complexity is a long-standing open problem \cite{Babai1995automorphism}.
Our results in Corollary \ref{corollary:LowerBoundOne} and Corollary \ref{corollary:LowerBoundTwo} 
provide unconditional lower bounds for interesting classes of graphs for any group 
of relatively small index (index at most $2^{cn}$ for some small enough constant $c$).

The crucial difference between the abstract isomorphism setting considered in \cite{Liebeck1983graphs}
and our setting is in the way in which graphs are used to represent groups. 
In the setting of \cite{Liebeck1983graphs}, given a group $\agroup$, the goal is to construct a graph $\agraph$ whose 
automorphism group is {\em isomorphic} to $G$. On the other hand, in the graph embedding setting, we want the group $G$ to 
be {\em equal} to the action of the automorphism group $\automorphisms(\agraph)$ on its first $[n]$ vertices.
In the abstract isomorphism setting it has been shown by Babai that any class of graphs $\graphclass$ excluding a fixed graph $H$ as a 
minor, there exists some finite group which is not isomorphic to the automorphism group of any graph in $\graphclass$ 
\cite{Babai1974automorphism}. Our Corollary \ref{corollary:LowerBoundTwo} can be regarded as a result in this spirit 
in the context of graph embedding. While the lower bound stated in Corollary \ref{corollary:LowerBoundTwo} also applies to
graphs that are not minor closed, this lower boud is only meaningful for graphs of maximum degree  at most $o(n/\log n)$. 

We observe that in Theorem \ref{theorem:MainTheoremEmbeddableGroup} an exponential dependence on the maximum degree parameter 
$\maximumdegree$ is unavoidable. Indeed, as stated above, the symmetric grammar complexity of the language $\stringfication(\symmetricgroup_n)$ is $2^{\Theta(n)}$.
On the other hand, for each $n\in \Nplus$, the symmetric group $\symmetricgroup_n$ can be embedded in the star 
graph $\completegraph_{n,1}$ with vertex set $\vertexset(\completegraph_{n,1}) = \{1,...,n+1\}$, 
and edge set $\edgeset(\completegraph_{n,1}) = \{\{i,n+1\}\;:\; i\in \{1,\dots,n\}\}$, which is a connected graph of treewidth $1$.
Nevertheless, it is not clear to us whether the logarithmic factor $\log \maximumdegree$ can be shaved from the exponent of the upper bound 
$2^{O(\treewidth\maximumdegree\log\maximumdegree)}\cdot m^{O(\treewidth)}$. 
We also note that the connectedness requirement is also crucial for our upper bounds since $\symmetricgroup_n$ can be embedded in the discrete
graph $\discretegraph_n$ with vertex set $\discretegraph_n = \{1,...,n\}$, and edge set $\edgeset(\discretegraph_n) = \emptyset$.

\section{Preliminaries}

We let $\bN$ denote the set of non-negative integers 
and $\Nplus = \bN \setminus \{0\}$ denote the set of positive integers. 
For each $n\in \Nplus$, we let $[n] = \{1,...,n\}$. For each finite set $S$ we let
$\powerset(S) = \{ S'\;:\; S'\subseteq S\}$ denote the set of all subsets of
$S$.  For each set $S$ and each $k\in \bN$, we let $\binom{S}{k}  =
\{S'\subseteq S\;:\; |S'|=k\}$ be the set of subsets of $S$ of size $\treewidth$ and 
$\binom{S}{\le k} = \bigcup_{i = 0}^k \binom{S}{i}$ the set of subsets of size at most $\treewidth$.
For a function $f\colon X \to Y$ and a set $X' \subseteq X$, we denote by $f|_{X'}$ 
the {\em restriction of $f$ to $X'$}, i.e. the function $f|_{X'} \colon X' \to Y$ with 
$f|_{X'}(x) = f(x)$ for each $x \in X'$. 

\medskip
\noindent{\bf Prefix Closed Sets.}
For each $r\in \Nplus$, we let $[r]^*$ be the set of all strings over $[r]$, including the empty 
string $\emptystring$. Let $p$ and $u$ be strings in $[r]^*$. We say that $p$ is a {\em prefix} 
of $u$ if there exists  $q\in [r]^*$ such that $u=pq$. Note that $u$ is a prefix of itself, and that 
the empty string $\emptystring$ is a prefix of each string in $[r]^*$. 
A non-empty subset $\treelikeset \subseteq [r]^*$ is {\em prefix closed} if 
for each $u\in \treelikeset$, each prefix of $u$ is also in $\treelikeset$. 
We note that the empty string $\emptystring$ is an element of any prefix 
closed subset of $[r]^*$.  We say that $\treelikeset\subseteq [r]^*$ is {\em well numbered} 
if for each $p\in [r]^*$ and each $j\in [r]$, the
presence of $pj$ in $\treelikeset$ implies that $p1,...,p(j-1)$ also belong to $\treelikeset$.

\medskip
\noindent{\bf Tree-Like Sets.}
We say that a subset $\treelikeset\subseteq [r]^*$ is {\em tree-like} if $\treelikeset$ is both 
prefix-closed and well-numbered. Let $\treelikeset$ be a tree-like subset of $[r]^*$. 
If $pj\in \treelikeset$, then we say that $pj$ is a {\em child} of $p$, or interchangeably, 
that $p$ is the {\em parent} of $pj$.  If
$pu\in \treelikeset$ for $u\in [r]^*$, then we say that $pu$ is a descendant of
$p$. For a node $p\in \treelikeset$ we let $\treelikeset|_p =\{pu\in \treelikeset\; :\; u\in [r]^*\}$
denote the set of all descendants of $p$. 
Note that $p$ is a descendant of itself and therefore, $p\in \treelikeset|_p$. 
A {\em leaf} of $\treelikeset$ is a node $p\in \treelikeset$ without children. We let $\leaves(\treelikeset)$ be the 
set of leaves of $\treelikeset$,
and $\leaves(\treelikeset,p)$ be the set of leaves which are descendants of $p$.

\medskip
\noindent{\bf Terms.} 
Let $\alphabet$ be a finite set of symbols. An {\em $r$-ary term} over $\alphabet$ 
is a function $\aterm:\positions(\aterm)\rightarrow \alphabet$ whose domain $\positions(\aterm)$ 
is a tree-like subset of $[r]^*$. 
We denote by $\myterms(\alphabet)$ the set of all terms 
over $\alphabet$. If $\aterm_1,...,\aterm_r$ are terms in $\myterms(\alphabet)$, and $\asymbol\in \alphabet$,
then we let $\aterm  = \asymbol(\aterm_1,...,\aterm_r)$ be the term in $\myterms(\alphabet)$ which 
is defined by setting $\aterm(\emptystring) = \asymbol$ and $\aterm(j\aposition) = \aterm_j(\aposition)$ 
for each $j\in [r]$ and each $\aposition\in \positions(t_j)$.

\section{Embedding Permutation Groups in Graphs}
\label{section:GraphEmbeddable}

For each finite set $\Gamma$, we let $\symmetricgroup(\Gamma)$ be the group of permutations of $\Gamma$. 
If $\Omega\subseteq \Gamma$  and $\apermutation\in \symmetricgroup(\Gamma)$, then we say that 
$\apermutation$ stabilizes $\Omega$ setwise if $\apermutation(\Omega) = \Omega$. Alternatively, we say 
that $\Omega$ is invariant under $\alpha$. We let $\alpha_{\Omega}$ be the permutation in $\symmetricgroup(\Omega)$ 
which is defined by setting $\alpha_{\Omega}(i)=\alpha(i)$ for each $i\in \Omega$. In other words,
$\alpha_{\Omega}$ is the restriction of $\alpha$ to $\Omega$.  
If $\agroup$ is a subgroup of $\symmetricgroup(\Gamma)$, then we let $\stab(\agroup,\Omega)$ be the set 
of permutations in $\agroup$ that stabilize $\Omega$ setwise. We say that a group $\agroup$ stabilizes 
$\Omega$ if $\stab(\agroup,\Omega) = \agroup$. Alternatively, we say that $\Omega$ is invariant under $\agroup$. 
We let $\agroup|_{\Omega} = \{\alpha|_{\Omega} \;:\; \alpha \in \agroup\}$ be the set of restrictions of permutations in 
$\agroup$ to $\Omega$. 
In what follows, for each $n\in \Nplus$ we write $\symmetricgroup_n$ to denote $\symmetricgroup([n])$.
\\ 

\noindent{\bf Graphs.} 
Let $m\in \Nplus$. An $m$-vertex graph is a pair $\agraph = ([m], \edgeset(\agraph))$,
where $\edgeset(\agraph) \subseteq \binom{[m]}{2}$.
\\

\noindent{\bf Isomorphisms and Automorphisms.}
If $\agraph$ and $\anothergraph$ are two $m$-vertex graphs, then an {\em isomorphism} between $\agraph$
and $\anothergraph$ is a permutation $\alpha\in \symmetricgroup_m$ such that for each
$\{i,j\}\in \binom{[m]}{2}$, $\{i,j\}\in \agraph$ if and only if
$\{\alpha(i),\alpha(j)\}\in \anothergraph$.  An {\em automorphism} of $\agraph$ is an isomorphism
between $\agraph$ and $\agraph$.  We let $\isomorphisms(\agraph,\anothergraph)$ denote the set of all
isomorphisms between $\agraph$ and $\anothergraph$, and let $\automorphisms(\agraph) =
\isomorphisms(\agraph,\agraph)$ be the set of automorphisms of $\agraph$. If $\Omega\subseteq [m]$ 
is invariant under $\automorphisms(\agraph)$ then we define
$\automorphisms(\agraph,\Omega) = \automorphisms(\agraph)|_{\Omega} = \{\alpha|_{\Omega}\;:\; \alpha\in \automorphisms(\agraph)\}$.

\begin{definition}
\label{definition:GraphEmbeddable}
Let $\agroup$ be a subgroup of $\symmetricgroup_n$ and $\agraph$ be a connected $m$-vertex graph where $m\geq n$. 
We say that $\agroup$ is embeddable in $\agraph$ if $\automorphisms(\agraph,[n]) = \agroup$. 
\end{definition}

In other words, $\agroup$ is embeddable in $\agraph$ if the image of action of the automorphism group of $\agraph$ on 
its first $n$ vertices is equal to $\agroup$. 
We note that the requirement that the graph $\agraph$ of Definition \ref{definition:GraphEmbeddable} is connected is 
crucial for our applications.

Let $\graphclass$ be a class of connected graphs and $\agroup$ be a subgroup of $\symmetricgroup_n$. 
We say that $\agroup$ is {\em $\graphclass$-embeddable} if
there exists some graph $\agraph\in \graphclass$ such that $\agroup$ is embeddable in $\agraph$. 
The {\em $\graphclass$ embedding complexity} of $\agroup$, denoted by $\embeddingcomplexity_{\graphclass}(\agroup)$ 
is the minimum $m$ such that $\agroup$ is embeddable in a graph $X\in \graphclass$ with at most $m$ vertices.
If no such a graph $X\in \graphclass$  exists, then we set $\embeddingcomplexity_{\graphclass}(\agroup)=\infty$.

\section{Using Grammars to Represent Finite Permutation Groups}
\label{section:GrammarGroups}

A {\em context-free grammar} is a $4$-tuple $\contextfreegrammar = (\alphabet,\variables,\grammarrelation,\initialvariable)$
where $\alphabet$ is a finite set of symbols, $\variables$ is a finite set of variables, 
$\grammarrelation \subseteq \variables \times (\alphabet \cup \variables)^*$ is a finite set of {\em production rules}, 
and $\initialvariable\in \variables$  is the {\em initial variable} of $\contextfreegrammar$.
The notion of a string $w$ generated by $\contextfreegrammar$ can be defined with basis on the notions 
of {\em $\contextfreegrammar$-parse-tree} and {\em yield} of a $\contextfreegrammar$-parse-tree, which are
inductively defined as follows.

\begin{enumerate}
\item For each $\asymbol\in \alphabet$ the term $\aterm:\{\emptystring\}\rightarrow \alphabet$ which sets 
	$\aterm(\emptystring) = \asymbol$ is a $\contextfreegrammar$-parse-tree. Additionally, $\yield(\aterm)  = \asymbol$.
\item If $\emptysymbol$ is the empty symbol then the term $\aterm:\{\emptystring\}\rightarrow \alphabet$ which 
	sets $\aterm(\emptystring) = \emptysymbol$ is a $\contextfreegrammar$-parse-tree. Additionally, $\yield(\aterm) = \emptysymbol$. 
\item If $\aterm_1,...,\aterm_r$ are $\contextfreegrammar$-parse-trees and 
	$\avariable\rightarrow \aterm_1(\emptystring)\aterm_2(\emptystring)...\aterm_r(\emptystring)$ is a production 
rule in $\grammarrelation$, then the term 
$\aterm = \avariable(\aterm_1,...,\aterm_r)$ is a $\contextfreegrammar$-parse-tree. 
Additionally, $$\yield(\aterm) = \yield(\aterm_1)\cdot \yield(\aterm_2)\cdot ... \cdot \yield(\aterm_r).$$
In other words, the yield of $\aterm$ is the concatenation of the yields of the subterms $\aterm_1,\dots,\aterm_r$. 
\end{enumerate}

We say that a $\contextfreegrammar$-parse-tree $\aterm$ is {\em accepting} if $\aterm(\emptystring) = \initialvariable$. 
We say that a string $w\in \alphabet^*$ is generated by $\contextfreegrammar$ if there is an accepting $\contextfreegrammar$-parse-tree
with $\yield(\aterm)  = w$. The language generated by $\contextfreegrammar$ is the set 
$\mylang(\contextfreegrammar) = \{w\in \alphabet^*\;:\; w\mbox{ is generated by $\contextfreegrammar$}\}$
of strings generated by $\contextfreegrammar$. The size of $\contextfreegrammar$ is defined as 
$$|\contextfreegrammar| = \sum_{(\avariable,u)\in \grammarrelation} (1+|u|) \log(|\alphabet|+|\variables|),$$ where
$|u|$ is the number of symbols/variables in $u$, $|\alphabet|$ is the number of elements in $\alphabet$ and $|\variables|$
is the number of elements in $\variables$. 
We denote by $\allgrammars(\alphabet)$ the set of context-free grammars over the alphabet $\alphabet$. \\

A context-free grammar $\contextfreegrammar$ is said to be regular if each production rule is either 
of the form $(\avariable,a)$ for some $\avariable\in \variables$ and $a\in \alphabet$, or of the form $(\avariable,a\avariable')$ 
for some $\avariable,\avariable'\in \variables$ and some $a\in \alphabet$. We denote by
$\allregulargrammars(\alphabet)$ the set of regular context-free grammars over the alphabet $\alphabet$. \\

\noindent{\bf Complexity Measures.}
If $\apermutation\in \symmetricgroup_n$ and $w\in \alphabet^{n}$ then 
we let $\permutedcoordinates(w,\apermutation) \defineequal w_{\apermutation(1)}w_{\apermutation(2)}...w_{\apermutation(n)}$
be the string obtained by permuting the positions of $w$ according to $\apermutation$. If $\blang\subseteq \alphabet^{n}$
then we let $\permutedcoordinates(\blang,\apermutation) \defineequal \{\permutedcoordinates(w,\apermutation)\;:\; w\in \blang\}$. 
In other words, $\permutedcoordinates(\blang,\apermutation)$ 
is the language obtained by permuting the positions of each string $w\in \blang$ according 
to $\apermutation$. The {\em symmetric grammar complexity} of a language $\blang\subseteq \alphabet^n$
is defined as the minimum size of a context-free grammar generating $\permutedcoordinates(\blang,\apermutation)$ 
for some  $\apermutation\in \symmetricgroup_n$. More precisely, 
$$\sgc(\blang) =\min \{|\contextfreegrammar|\;:\;\apermutation\in \symmetricgroup,\;\contextfreegrammar\in \allgrammars(\alphabet),\;
\mylang(\contextfreegrammar) = \permutedcoordinates(\blang,\apermutation)\}.$$

Analogously, the {\em symmetric regular grammar complexity} of a language $\blang\subseteq \alphabet^n$  is 
defined as the minimum size of a {\em regular} grammar generating $\permutedcoordinates(\blang,\apermutation)$ for 
some $\apermutation\in \symmetricgroup_n$.
$$\regsgc(\blang) = \min \{|\contextfreegrammar|:\apermutation\in \symmetricgroup,\;\contextfreegrammar\in \allregulargrammars(\alphabet),\;
\mylang(\contextfreegrammar) = \permutedcoordinates(\blang,\apermutation)\}.$$

We note that the {\em symmetric regular grammar complexity} of a language $\lang\subseteq \alphabet^n$ is polynomially
related to the minimum size of an acyclic non-deterministic finite automaton accepting some permuted version of $\lang$, or equivalently to the 
minimum size of a non-deterministic read-once oblivious branching program accepting $\lang$. On the other hand, the symmetric 
context-free complexity of a language $\lang$ is polynomially related to the minimum size of a pushdown automaton 
accepting some permuted version of $\lang$.

Let $\apermutation:[n]\rightarrow [n]$ be a permutation in $\symmetricgroup_n$. 
We let
$$\stringfy{\apermutation}= \apermutation(1)\apermutation(2)...\apermutation(n)\in [n]^{n}$$ 
be the string associated with $\apermutation$. For each group $\agroup\subgroup \symmetricgroup_n$ we let 
$\stringfy{\agroup} = \{\stringfy{\apermutation}\;:\; \apermutation\in \agroup\}$ be the language associated with 
$\agroup$. The symmetric grammar complexity of $\agroup$ is defined as 
$\sgc(\agroup) \defineequal \sgc(\stringfy{\agroup})$. 
Analogously, the regular 
grammar complexity of $\agroup$ is 
defined as $\regsgc(\agroup) \defineequal \regsgc(\stringfy{\agroup})$. 

If $\anotherpermutation:[n]\rightarrow [n]$ and $\gamma:[n]\rightarrow [n]$
are permutations in $\symmetricgroup_n$, then we let $\anotherpermutation \circ \gamma$ be the permutation that sends each 
$i\in [n]$ to the number $\anotherpermutation(\gamma(i))$. If $\asubset$ is a subset of $\symmetricgroup_n$, 
we let $\anotherpermutation\circ \asubset \defineequal \{\anotherpermutation \circ \gamma \;:\;\gamma \in \asubset\}$.
Note that if $\agroup$ is a subgroup of $\symmetricgroup_n$, $\anothergroup$ is a subgroup of $\agroup$, and $\anotherpermutation\in \agroup$,
then $\anotherpermutation \circ \anothergroup$ is a left coset of $\anothergroup$ in $\agroup$. 
The following proposition, which will be used in the proofs of
Lemma \ref{lemma:ContextFreeTradeoff} and Theorem \ref{theorem:MainTheoremEmbeddableGroup} 
follows from the fact that context-free languages are closed under homomorphisms.

\begin{proposition}
\label{proposition:RenamingProposition}
Let $\anothergroup\subseteq \symmetricgroup_n$, and $\apermutation$ be a permutation in $\symmetricgroup_n$. 
Let $\contextfreegrammar$ be a context-free grammar such that 
$\mylang(\contextfreegrammar) = \permutedcoordinates(\stringfication(\anothergroup),\apermutation)$. 
Then for each permutation $\anotherpermutation\in \symmetricgroup_n$ there is a context-free grammar 
$\contextfreegrammar_{\anotherpermutation}$
of size $|\contextfreegrammar_{\anotherpermutation}| = |\contextfreegrammar|$ generating 
$\permutedcoordinates(\stringfication(\anotherpermutation\circ \anothergroup),\apermutation)$.
\end{proposition}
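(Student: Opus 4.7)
The plan is to observe that the languages $\permutedcoordinates(\stringfication(\anothergroup),\apermutation)$ and $\permutedcoordinates(\stringfication(\anotherpermutation\circ \anothergroup),\apermutation)$ are related by the letter-to-letter homomorphism $h\colon [n]\to [n]$ defined by $h(a) = \anotherpermutation(a)$, and then to apply this homomorphism symbol-by-symbol to the right-hand sides of the production rules of $\contextfreegrammar$ in order to build $\contextfreegrammar_{\anotherpermutation}$.

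First I would unfold the definitions. A string in $\stringfication(\anothergroup)$ has the form $\gamma(1)\gamma(2)\cdots\gamma(n)$ for some $\gamma\in \anothergroup$, and therefore a string in $\permutedcoordinates(\stringfication(\anothergroup),\apermutation)$ has the form $\gamma(\apermutation(1))\gamma(\apermutation(2))\cdots\gamma(\apermutation(n))$. On the other hand, an element of $\anotherpermutation\circ \anothergroup$ is a permutation $\anotherpermutation\circ\gamma$ with $\gamma\in \anothergroup$, whose string encoding is $\anotherpermutation(\gamma(1))\cdots\anotherpermutation(\gamma(n))$, so that a string in $\permutedcoordinates(\stringfication(\anotherpermutation\circ \anothergroup),\apermutation)$ has the form $\anotherpermutation(\gamma(\apermutation(1)))\cdots\anotherpermutation(\gamma(\apermutation(n)))$. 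Thus, applying $h$ to each symbol of a string in $\permutedcoordinates(\stringfication(\anothergroup),\apermutation)$ yields precisely a string in $\permutedcoordinates(\stringfication(\anotherpermutation\circ \anothergroup),\apermutation)$, and since $\anotherpermutation$ is a bijection this correspondence is itself bijective. Consequently, $h(\permutedcoordinates(\stringfication(\anothergroup),\apermutation)) = \permutedcoordinates(\stringfication(\anotherpermutation\circ \anothergroup),\apermutation)$.

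Now I would construct $\contextfreegrammar_{\anotherpermutation}$ explicitly. Let $\contextfreegrammar = (\alphabet,\variables,\grammarrelation,\initialvariable)$ with $\alphabet=[n]$. Define $\contextfreegrammar_{\anotherpermutation} = (\alphabet,\variables,\grammarrelation_{\anotherpermutation},\initialvariable)$, where $\grammarrelation_{\anotherpermutation}$ is obtained from $\grammarrelation$ by replacing, on the right-hand side of every production rule $(\avariable,u)\in \grammarrelation$, every terminal occurrence of a symbol $a\in [n]$ by $\anotherpermutation(a)$, while keeping variable occurrences unchanged. A straightforward induction on the height of a $\contextfreegrammar_{\anotherpermutation}$-parse-tree, mirroring the corresponding induction on $\contextfreegrammar$-parse-trees, shows that $\yield$ commutes with the symbol-renaming operation, from which $\mylang(\contextfreegrammar_{\anotherpermutation}) = h(\mylang(\contextfreegrammar)) = \permutedcoordinates(\stringfication(\anotherpermutation\circ \anothergroup),\apermutation)$ follows.

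Finally, since the construction preserves the set of variables, the set of rules, and the length of every right-hand side (each terminal is replaced by exactly one terminal), the size formula $|\contextfreegrammar| = \sum_{(\avariable,u)\in \grammarrelation}(1+|u|)\log(|\alphabet|+|\variables|)$ gives $|\contextfreegrammar_{\anotherpermutation}| = |\contextfreegrammar|$, as required. The main subtlety, more bookkeeping than mathematical, is to ensure that the renaming is applied only to terminal symbols and never to variables, so that the grammar structure (and hence its size) is unaffected; the rest is a routine verification of the homomorphism-closure argument.
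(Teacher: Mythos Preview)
Your proof is correct and follows essentially the same approach as the paper: both apply the letter-to-letter renaming $a\mapsto\anotherpermutation(a)$ to the terminal occurrences in every right-hand side of $\contextfreegrammar$, leaving variables fixed, and observe that this preserves the grammar's size while transforming the generated language as required. Your unfolding of the definitions to verify the homomorphism identity is, if anything, more explicit than the paper's.
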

\begin{proof}
Let $\contextfreegrammar = ([n],\variables,\grammarrelation,\avariable_1)$ 
	be a context free grammar generating $\permutedcoordinates(\stringfication(\anothergroup),\apermutation)$.
Let $\hat{\anotherpermutation}:[n]\cup \variables \rightarrow [n]\cup \variables$
be the extension of $\anotherpermutation$ to the set $[n]\cup \variables$ which 
sets $\hat{\anotherpermutation}(\asymbol) = \anotherpermutation(\asymbol)$ if $\asymbol\in [n]$
and $\hat{\anotherpermutation}(\asymbol)= \asymbol$ if $\asymbol \in \variables$. For each 
string $u = u_1...u_m \in ([n]\cup \variables)^*$ let 
$\hat{\anotherpermutation}(u) = \hat{\anotherpermutation}(u_1)...\hat{\anotherpermutation}(u_m)$.
Finally, let $\contextfreegrammar'$ be the context-free grammar obtained from 
$\contextfreegrammar$ by replacing each production rule $(\avariable,u)\in \productionrules$ 
with the production rule $(\avariable,\hat{\anotherpermutation}(u))$. Clearly, we have that 
$|\contextfreegrammar| = |\contextfreegrammar'|$. Additionally, it is straightforward
to verify that $\contextfreegrammar$ generates a string $a_1a_2...a_n\in [n]^n$ if and only
if $\contextfreegrammar'$ generates the string
$\anotherpermutation(a_1)\anotherpermutation(a_2)...\anotherpermutation(a_n)$. Therefore, 
$\mylang(\contextfreegrammar') = \permutedcoordinates(\stringfication(\anotherpermutation \circ  \anothergroup),\apermutation)$.
\end{proof}

The following theorem, which will be crucial to the proof of our main result (Theorem~\ref{theorem:MainTheoremEmbeddableGroup}),
upper bounds the symmetric grammar complexity of the automorphism group of a graph
in terms of the number of its vertices, its maximum degree, and its treewidth.
If the latter two quantities are bounded, then this upper bound is polynomial in the number of its vertices.
\begin{theorem}
\label{theorem:AutomorphismsTreewidth}
Let $\agraph$ be a connected graph with $n$ vertices, treewidth $\treewidth$ and maximum degree $\maximumdegree$. 
Then
$$\sgc(\automorphisms(\agraph)) \leq 2^{O(\treewidth \maximumdegree \log \maximumdegree)}\cdot n^{O(\treewidth)}.$$ 
Additionally, one can construct in time $2^{O(\treewidth \maximumdegree \log \maximumdegree)}\cdot n^{O(\treewidth)}$
a permutation $\alpha\in \symmetricgroup_n$ and a context-free grammar $\contextfreegrammar(\agraph)$ generating
	the language $\permutedcoordinates(\stringfication(\automorphisms(\agraph)),\apermutation)$.
\end{theorem}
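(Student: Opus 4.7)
The plan is to construct $\contextfreegrammar(\agraph)$ directly from a nice tree decomposition of $\agraph$, following the standard tree-decomposition dynamic programming template. I first compute a nice tree decomposition $\treedecomposition = (T,(B_t)_{t\in V(T)})$ of $\agraph$ with width $\treewidth$ and $O(n)$ nodes, and then fix the linear order $\alpha\in\symmetricgroup_n$ by listing each vertex the first time it is introduced during a pre-order DFS traversal of $\treedecomposition$. This choice guarantees that for every node $t$ the set $V_t$ of vertices appearing in the subtree of $t$ is a contiguous interval of $\alpha$-positions, so the substring of $\permutedcoordinates(\stringfication(\apermutation),\alpha)$ indexed by this interval is exactly the yield to be produced by the subderivation rooted at $t$.

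The non-terminals of $\contextfreegrammar(\agraph)$ are tuples $\nonterminalone_{t,\mu,\eta}$, where $\mu\colon B_t\to V(\agraph)$ is an injective map preserving the adjacencies inside $B_t$, and $\eta=(\eta_v)_{v\in B_t}$ assigns to each $v\in B_t$ an injective partial function $\eta_v\colon N(v)\to N(\mu(v))$ between open neighborhoods. The intended semantics is that $\eta_v(u)$ is defined exactly when $u\in N(v)$ has already been processed in the subtree below $t$, and in that case $\eta_v(u)=\mu(u)$. Production rules follow the three nice-decomposition operations. An introduce rule for a vertex $v$ guesses $\mu(v)\in V(\agraph)$, initializes $\eta_v$ on the neighbors of $v$ currently in $B_{t'}$, updates $\eta_u$ for each $u\in B_{t'}\cap N(v)$ by setting $\eta_u(v)=\mu(v)$, and emits the terminal $\mu(v)$ in the position dictated by $\alpha$. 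A forget rule for $v$ verifies that $\eta_v$ has become a total bijection $N(v)\to N(\mu(v))$ and then drops $v$ from the state; this is possible because every neighbor of $v$ in $\agraph$ already lies in $V_{t'}$ by the tree-decomposition property. A join rule merges two children sharing the same $\mu$, and requires that their respective $\eta_v$ maps agree on $N(v)\cap B_t$ and have otherwise disjoint images in $N(\mu(v))$, so that their union is again an injective partial function.

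For soundness, the local edge constraints imposed by the $\eta_v$'s together with the forget-node checks guarantee that any map $\apermutation$ read off an accepting derivation satisfies that $\apermutation|_{N[v]}$ is a bijection onto $N[\apermutation(v)]$ for every $v\in V(\agraph)$. Since $\agraph$ is finite and connected, such a local isomorphism from $\agraph$ into itself must be a global automorphism: the image $\apermutation(V(\agraph))$ is non-empty and closed under taking neighborhoods, hence equals $V(\agraph)$ by connectedness, which forces $\apermutation$ to be surjective and therefore bijective. This structural fact is what lets us avoid storing the full image $\mu(V_t)$ in the state. Completeness is then immediate from any $\apermutation\in\automorphisms(\agraph)$ by reading off $\mu=\apermutation|_{B_t}$ and $\eta_v(u)=\apermutation(u)$ at each bag.

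For the size, the number of choices for $\mu$ at a fixed bag is at most $n^{\treewidth+1}$, and the number of choices for $\eta$ is at most $\bigl(2^{O(\maximumdegree\log\maximumdegree)}\bigr)^{\treewidth+1}=2^{O(\treewidth\maximumdegree\log\maximumdegree)}$, since each $\eta_v$ is a partial injection between two sets of size at most $\maximumdegree$. Multiplying by the $O(n)$ bags of $\treedecomposition$ and by the polynomial number of production rules emanating from each non-terminal yields the claimed overall size and construction-time bound. The main obstacle, in my view, is formulating the join rule so that the two incoming $\eta_v$ maps merge into a single globally consistent partial injection: one has to attribute every edge of $\agraph$ to exactly one side of the join consistently with the tree decomposition, and then carefully count only the images in $N(\mu(v))$ coming from $V_{t_1}\setminus B_t$ and $V_{t_2}\setminus B_t$. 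Once this combinatorial bookkeeping is handled, the rest reduces to the standard tree-decomposition dynamic programming template combined with the covering-style lemma above.
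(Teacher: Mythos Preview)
Your approach is essentially the same as the paper's: annotate each bag of a tree decomposition with a locally consistent partial automorphism on the bag's closed neighborhood, then invoke connectedness of $\agraph$ to argue that any map assembled from a globally consistent annotation is surjective (hence bijective) and edge-preserving. The paper packages the annotation as a single map $\evtx\colon\closedneighborhood[S]\to V(\agraph)$ that is required to be a partial automorphism with $\evtx(\closedneighborhood[S])=\closedneighborhood[\evtx(S)]$ and is guessed in full at every bag; this makes the join rule trivial (parent and both children carry the identical $\evtx$), so the ``main obstacle'' you identify of merging the partial maps $\eta_v$ at a join simply does not arise. Instead of nice tree decompositions, the paper first converts the decomposition into a \emph{permutation-yielding} form in which every leaf bag is a singleton $\{v\}$ and each vertex owns exactly one leaf; terminals are emitted only at those leaves, which replaces your introduce/forget bookkeeping.

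One caution regarding your formulation: in a standard nice tree decomposition a vertex $v$ that sits in the bag of a join node is introduced once in \emph{each} branch below the join, so the rule ``emit $\mu(v)$ at the introduce node'' would output the terminal for $v$ several times. You need to designate a unique introduce node per vertex (say the first one in your pre-order DFS) and emit only there; correspondingly, the contiguous interval derived below a node $t$ should be the set of vertices whose \emph{designated} introduce lies in the subtree of $t$, not the set $V_t$ of all vertices appearing in some bag of that subtree (the latter need not form an interval, since a vertex of $B_t$ may have its designated introduce in a sibling subtree). With that adjustment your argument goes through and matches the paper's bound.
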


\begin{myremark}
\label{remark:AutomorphismsPathwidth}
If the graph $\agraph$ of Theorem \ref{theorem:AutomorphismsTreewidth} has pathwidth $\pathwidth$, then 
 one may assume that  $\contextfreegrammar(\agraph)$ is a regular grammar. In other words, in this case, 
$\regsgc(\automorphisms(\agraph)) \leq 2^{O(\treewidth\maximumdegree\log\maximumdegree)}\cdot n^{O(\treewidth)}$. 
\end{myremark}

Theorem \ref{theorem:AutomorphismsTreewidth} can be simultaneously generalized in two ways. First, 
by allowing grammars to represent not only the automorphism group of a graph, but also groups that 
can be embedded in the graph. Second, not only the groups themselves but also left cosets of such 
groups can be represented in the same way. The result of these generalizations is stated in the next theorem.

\begin{theorem}
\label{theorem:MainTheoremEmbeddableGroup}
Let $\agroup\subgroup \symmetricgroup_n$, and suppose that $\agroup$ is embeddable on a graph $\agraph$
with $m$ vertices ($m\geq n$), maximum degree $\maximumdegree$, and treewidth $\treewidth$. Then, 
for each $\anotherpermutation\in \symmetricgroup_n$, 
$$\sgc(\anotherpermutation\circ \agroup)\leq 2^{O(\treewidth\maximumdegree\log\maximumdegree)}\cdot m^{O(\treewidth)}.$$
Additionally, given $\agraph$ and $\anotherpermutation$, one can construct in time 
$2^{O(\treewidth\maximumdegree\log\maximumdegree)}\cdot m^{O(\treewidth)}$ a permutation 
$\apermutation\in \symmetricgroup_n$ (depending only on $\agraph$) and a grammar 
$\contextfreegrammar_{\anotherpermutation}$ generating the language $\permutedcoordinates(\stringfication(\anotherpermutation\circ \agroup),\apermutation)$.
\end{theorem}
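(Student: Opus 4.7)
\textbf{The plan} is to reduce to Theorem \ref{theorem:AutomorphismsTreewidth} via two short post-processing steps: an erasing projection from $[m]$ down to $[n]$, followed by an application of Proposition \ref{proposition:RenamingProposition} to shift to the left coset $\anotherpermutation \circ \agroup$. Since $\agroup$ is embeddable in $\agraph$, the definition forces $[n]$ to be invariant under $\automorphisms(\agraph)$ and $\automorphisms(\agraph)|_{[n]} = \agroup$; these two facts drive the whole argument.

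First, I apply Theorem \ref{theorem:AutomorphismsTreewidth} to the $m$-vertex graph $\agraph$ to obtain, in time $2^{O(\treewidth\maximumdegree\log\maximumdegree)} \cdot m^{O(\treewidth)}$, a permutation $\sigma \in \symmetricgroup_m$ and a context-free grammar $\contextfreegrammar(\agraph)$ of the same size generating $\permutedcoordinates(\stringfication(\automorphisms(\agraph)), \sigma)$. Next, I collapse this grammar down to an alphabet of size $n$. Set $I \defineequal \{i \in [m] : \sigma(i) \in [n]\}$, which has cardinality $n$ since $\sigma$ is a bijection; enumerate $I = \{i_1 < \cdots < i_n\}$; and define $\apermutation \in \symmetricgroup_n$ by $\apermutation(j) \defineequal \sigma(i_j)$. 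Consider the erasing homomorphism $h : [m]^* \to [n]^*$ with $h(j) = j$ for $j \in [n]$ and $h(j) = \emptystring$ otherwise. Because $[n]$ is invariant under $\automorphisms(\agraph)$, for any $\phi \in \automorphisms(\agraph)$ we have $\phi(\sigma(i)) \in [n]$ iff $\sigma(i) \in [n]$ iff $i \in I$, so
\[
h(\permutedcoordinates(\stringfication(\phi), \sigma)) \;=\; \phi(\sigma(i_1)) \cdots \phi(\sigma(i_n)) \;=\; \permutedcoordinates(\stringfication(\phi|_{[n]}), \apermutation).
\]
As $\phi$ ranges over $\automorphisms(\agraph)$, its restriction $\phi|_{[n]}$ ranges over $\agroup$, so $h(\mylang(\contextfreegrammar(\agraph))) = \permutedcoordinates(\stringfication(\agroup), \apermutation)$. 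Applying $h$ to every right-hand side of every production rule of $\contextfreegrammar(\agraph)$ is the standard witness that context-free languages are closed under homomorphisms; the resulting grammar has size at most $|\contextfreegrammar(\agraph)|$, since erasing can only shorten right-hand sides while the variable set is unchanged and the alphabet shrinks from $[m]$ to $[n]$.

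Finally, I invoke Proposition \ref{proposition:RenamingProposition} with $\anothergroup \defineequal \agroup$ and the permutation $\apermutation$ just constructed: for each $\anotherpermutation \in \symmetricgroup_n$, it produces a grammar $\contextfreegrammar_{\anotherpermutation}$ of the same size generating $\permutedcoordinates(\stringfication(\anotherpermutation \circ \agroup), \apermutation)$, the desired output. Observe that $\apermutation$ depends only on $\agraph$ (through $\sigma$ and $I$), matching the theorem's requirement, and the combined time and size bounds remain $2^{O(\treewidth\maximumdegree\log\maximumdegree)} \cdot m^{O(\treewidth)}$. The heavy lifting lies entirely in Theorem \ref{theorem:AutomorphismsTreewidth}; the only real subtlety in the bookkeeping above is recognizing that the invariance of $[n]$ under $\automorphisms(\agraph)$, which is built into the embedding definition, is precisely what makes the erasing morphism $h$ act uniformly on every accepted string, so that the projection faithfully encodes the restriction map $\phi \mapsto \phi|_{[n]}$ at the language level.
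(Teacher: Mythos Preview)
Your proof is correct and follows essentially the same approach as the paper: apply Theorem~\ref{theorem:AutomorphismsTreewidth}, project via an erasing homomorphism that deletes symbols outside $[n]$, and then relabel by $\anotherpermutation$ (the paper does this last step by a direct homomorphism application rather than invoking Proposition~\ref{proposition:RenamingProposition}, but the content is identical). In fact your definition of the resulting permutation $\apermutation$ via the index set $I=\{i:\sigma(i)\in[n]\}$ is more careful than the paper's, which writes $\apermutation=\apermutation'|_{[n]}$ and asserts $\apermutation'([n])=[n]$ without justification; your argument correctly identifies that what is really being used is the invariance of $[n]$ under $\automorphisms(\agraph)$, which pins down which \emph{positions} survive the erasure.
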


\begin{myremark}
\label{remark:MainTheoremEmbeddableGroupPathwidth}
If the graph $\agraph$ of Theorem \ref{theorem:MainTheoremEmbeddableGroup} has pathwidth $\pathwidth$, then
one may assume that $\contextfreegrammar_{\anotherpermutation}(\agraph)$ is a regular grammar. In other 
words, in this case, $\regsgc(\anotherpermutation\circ \agroup) 
\leq 2^{O(\treewidth\maximumdegree\log\maximumdegree)}\cdot m^{O(\treewidth)}$. 
\end{myremark}

\subsection{Proof of Theorem \ref{theorem:MainTheoremEmbeddableGroup}}
\label{subsection:ProofMainTheoremFormalLanguages}

In this section, we will prove Theorem \ref{theorem:MainTheoremEmbeddableGroup}, which establishes 
an upper bound for the symmetric grammar complexity of a permutation group $\agroup$ in function of the size,
treewidth and maximum degree of a graph embedding $\agroup$. On the way to prove 
Theorem \ref{theorem:MainTheoremEmbeddableGroup}, we will first prove Theorem \ref{theorem:AutomorphismsTreewidth}. 
The proofs of Remarks \ref{remark:AutomorphismsPathwidth} and \ref{remark:MainTheoremEmbeddableGroupPathwidth} follow 
by small adaptations of the proofs of Theorems \ref{theorem:AutomorphismsTreewidth} and \ref{theorem:MainTheoremEmbeddableGroup}
respectively.

\medskip
\noindent{\bf Subtree-Like Sets and Subterms.}
Let $r\in \Nplus$,  $\treelikeset\subseteq [r]^*$ be a tree-like set, 
$p,q \in \treelikeset$ and $u$ be the longest common prefix of $p$ and $q$. Let $p=up'$ and $q=uq'$. The {\em
distance} between $p$ and $q$ is defined as $|p'| + |q'|$.
We call a set $\subtreelikeset \subseteq \treelikeset$ {\em subtree-like} if
there exists a $p \in \subtreelikeset$, such that $p$ is a prefix of every 
$q\in \subtreelikeset$, and if the set $\subtreelikeset' = \{u
\mid pu \in \subtreelikeset\}$ is prefix-closed. In particular, for 
each $p\in \treelikeset$, the set $\treelikeset|_p$ is subtree-like.
One can obtain from $\subtreelikeset'$ a tree-like set $\subtreelikeset''$ by
making $\subtreelikeset'$ well-numbered in the obvious way. We call
$\subtreelikeset''$ the tree-like set {\em induced by} $\subtreelikeset$.
For a set $\treelikeset' \subseteq \treelikeset$, we call the smallest
subtree-like set containing $\treelikeset'$ the {\em closest ancestral closure
of $\treelikeset'$}.
For any subtree-like set $P \subseteq \positions(\aterm)$, we call $\aterm|_{P}$ a 
{\em subterm} of $\aterm$. If $P'$ is the induced tree-like set of $P$, then we 
call the corresponding term $\aterm'$ with $\positions(\aterm') = P'$ the 
{\em $\aterm$-term induced by $P$}.  For a position $p \in \positions(\aterm)$, 
we denote by $\aterm|_p$ the subterm of $\aterm$ rooted at $p$, i.e.~we 
let $\aterm|_p \defineequal \aterm|_{N|_p}$.

\medskip
\noindent{\bf Neighborhood of a Vertex, and Induced Subgraphs.}
Let $\agraph$ be a $n$-vertex graph. For a vertex $v \in [n]$, we let 
$\neighborhood(v) \defineequal \{u\in [n] \;:\; \{v, u\} \in E(\agraph)\}$
be the {\em neighborhood} of $v$. If $S\subseteq [n]$ then we let 
$\neighborhood(S) = \bigcup_{v\in S} \neighborhood(v)$ be the neighborhood
of $S$. Finally, we let $\closedneighborhood(S)= \neighborhood(S)\cup S$ 
be the {\em closed neighborhood} of $S$. The subgraph of $\agraph$ induced by 
$S$ is defined as $\agraph[S] = (S,\edgeset(\agraph[S]))$ where 
$E(\agraph[S]) = \edgeset(\agraph)\cap \binom{S}{2}$.

\medskip
\noindent{\bf Tree decomposition as Terms.}
If we regard the set $\binom{V(X)}{\le \decwidth+1}$ as an alphabet, then 
each width-$\treewidth$ tree decomposition of a graph $\agraph$ may be regarded as 
a term over $\binom{V(X)}{\le \decwidth+1}$. More precisely, 
let $\agraph$ be an $n$-vertex graph and  $\decwidth \in \{0,1,\ldots,n-1\}$. 
A {\em width-$\decwidth$ tree decomposition} (or simply {\em tree decomposition}, if $\decwidth$ is clear from the context)
of $\agraph$ is a term $\treedecomposition \in \myterms(\binom{\vertexset(\agraph)}{\le \decwidth + 1})$ satisfying the following axioms.
	\begin{enumerate}[label={(T\arabic*)}]
		\item $\bigcup_{\aposition \in \positions(\treedecomposition)} \treedecomposition(\aposition) = 
		\vertexset(X)$\label{def:tree:decomposition:vertices}
		\item For each vertex $\avertex \in \vertexset(\agraph)$ and each of its neighbors 
			$\anothervertex \in \neighborhood(\avertex)$, there is a position $\aposition \in \positions(\treedecomposition)$ 
			such that $\{\avertex, \anothervertex\} \subseteq \td(\aposition)$.\label{def:tree:decomposition:edges}
		\item For each vertex $\avertex \in \vertexset(\agraph)$, the set 
		$\{\aposition \in \positions(\treedecomposition) \mid \avertex \in \treedecomposition(\aposition)\}$ 
		induces a subterm of $\treedecomposition$.\label{def:tree:decomposition:subterm}
	\end{enumerate}
	The {\em treewidth} of $\agraph$, is defined as the smallest non-negative integer $\treewidth \in \bN$
	 such that $X$ admits a width-$\treewidth$ tree decomposition.

\medskip
\noindent{\bf Annotated Tree Decompositions.}
Let $\agraph$ be an $n$-vertex graph, $S$ and $S'$ be subsets of $[n]$ such that $|S|=|S'|$, and 
$\evtx:S\rightarrow S'$ be a bijection. We say that $\evtx$ is a 
{\em partial automorphism} of $\agraph$ if $\evtx$ is an isomorphism from the subgraph $\agraph[S]$ of $\agraph$
induced by $S$ to the subgraph $\agraph[S']$ of $\agraph$ induced by $S'$. 
Next, we define the notion of {\em annotated tree decomposition} of a graph $\agraph$. 
These are tree-decompositions whose bags are annotated with partial automorphisms.

\begin{definition}[Annotated Bags]
\label{definition:AnnotatedBag}
	Let $X$ be an $n$-vertex graph and $\treewidth \in \{0, \ldots, n-1\}$. A {\em $\treewidth$-annotated bag} is a pair 
	$(S, \evtx)$, where $S \in \binom{V(X)}{\le \treewidth+1}$, and $\evtx \colon \closedneighborhood[S] \to V(X)$
	is a function satisfying the following two properties. 
\begin{enumerate}
	\item\label{eq:annotated:bag:iso:immage} $\evtx(\closedneighborhood(S)) = \closedneighborhood(\evtx(S))$.  In other words, 
		the image of $\closedneighborhood(S)$ under $\evtx$ is equal to the closed neighborhood of the image of $S$
		under $\evtx$. 
	\item\label{eq:annotated:bag:iso} $\evtx$ is a partial automorphism of $\agraph$. 
\end{enumerate}	
\end{definition}

	We let $\annotatedbags(\agraph, \treewidth)$ be the set of all $\treewidth$-annotated bags of $\agraph$. If $b$ is a 
	$\treewidth$-annotated bag in $\annotatedbags(X,\treewidth)$, then we denote the first coordinate of $b$ by $b.S$
	and the second coordinate of $b$ by $b.\evtx$. In other words, $b = (b.S,b.\evtx)$. 
	We let $\poormap \colon \annotatedbags(X,\treewidth) \to \binom{V(X)}{\le w+1}$ be the map that takes an
	annotated bag $b\in \annotatedbags(X,\treewidth)$ and sends it to the bag $\poormap(b) = b.S \in \binom{V(X)}{\le k+1}$.
	In other words, the map $\poormap$ erases the second coordinate of the annotated bag $b$. 
	We extend $\poormap$ to terms in $\myterms(\annotatedbags(\agraph,\treewidth))$ positionwise. 
	More precisely, for each term $\eterm \in \myterms(\annotatedbags(X,\treewidth))$, we let $\poormap(\annotatedterm)$ be the 
	term in $\myterms(\binom{V(X)}{\le k+1})$ where $\positions(\poormap(\eterm)) \defineequal \positions(\annotatedterm)$ and 
	 $\poormap(\annotatedterm)(p)  \defineequal \poormap(\eterm(p))$ for each $p \in \positions(\aterm)$. 
	 We say that a term $\eterm\in \myterms(\annotatedbags(\agraph,\treewidth))$ is an {\em annotation} of a term 
	 $\aterm \in \myterms(\binom{V(X)}{\le k+1})$ if $\poormap(\eterm) = t$. Note that a term $t\in \myterms(\binom{V(X)}{\le k+1})$
	 may have many annotations.

In Definition \ref{definition:AnnotatedBag}, once 
a subset $S\subseteq V(\agraph)$ is fixed, 
there are at most $\cO(n^{k+1})$ choices for the image of $S$ under the partial isomorphism $\evtx$. 
Once such an image is fixed, for each vertex $x \in S$
there are at most $\maximumdegree!$ ways of mapping the neighbors of $x$ to the
neighbors of $\evtx(x)$. Hence there are at most $(\maximumdegree!)^{k+1}$
choices for obtaining a partial automorphism for a fixed image of $S$. Therefore, 
by noting that $\maximumdegree! = 2^{O(\maximumdegree\log \maximumdegree)}$, we have
the following observation.

\begin{observation}
	Let $X$ be a graph of maximum degree $\maximumdegree$ and let $k \in \{0,\ldots,n-1\}$. Then, 
	$\card{\annotatedbags(X,\treewidth)} \le 2^{\cO(k \maximumdegree \cdot \log \maximumdegree)}\cdot n^{\cO(\treewidth)}$.
\end{observation}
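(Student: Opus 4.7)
The plan is to bound $\card{\annotatedbags(X,k)}$ by counting the three degrees of freedom present in an annotated bag $b = (b.S, b.\evtx)$ separately. I would first choose the bag $b.S \subseteq V(X)$, then choose the image $b.\evtx(b.S)$, and finally choose how $b.\evtx$ acts on the remaining vertices of $\closedneighborhood(b.S) \setminus b.S$.

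For the first step, the number of subsets $S \subseteq V(X)$ with $|S| \le k+1$ is $\binom{n}{\le k+1} = n^{O(k)}$. For the second step, once $S$ is fixed with $|S| = s \le k+1$, any potential image $\evtx(S)$ is again a subset of $V(X)$ of size $s$, and the bijection $S \to \evtx(S)$ is one of at most $s! \le (k+1)!$ maps; this contributes another factor of $n^{O(k)} \cdot 2^{O(k \log k)}$, which is absorbed into $n^{O(k)} \cdot 2^{O(k\maximumdegree\log\maximumdegree)}$. For the third step, condition~\ref{eq:annotated:bag:iso:immage} of Definition~\ref{definition:AnnotatedBag} forces $\evtx$ to map $\neighborhood(x) \setminus S$ into $\closedneighborhood(\evtx(x))$ for each $x \in S$, and since $|\neighborhood(x)| \le \maximumdegree$, there are at most $\maximumdegree!$ ways to extend $\evtx$ on the neighbors of each individual vertex $x \in S$. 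Since $|S| \le k+1$, the total number of such extensions is at most $(\maximumdegree!)^{k+1} = 2^{O(k\maximumdegree\log\maximumdegree)}$, using $\maximumdegree! = 2^{O(\maximumdegree\log\maximumdegree)}$.

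Multiplying the three bounds gives
\[
\card{\annotatedbags(X,k)} \;\le\; n^{O(k)} \cdot n^{O(k)} \cdot 2^{O(k\log k)} \cdot 2^{O(k\maximumdegree\log\maximumdegree)} \;=\; 2^{O(k\maximumdegree\log\maximumdegree)} \cdot n^{O(k)},
\]
which is the claimed estimate.

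I do not foresee a substantial obstacle: the statement is essentially a counting observation, and the prose immediately preceding it already sketches all the ingredients. The only mild subtlety is to make sure that condition~\ref{eq:annotated:bag:iso} (that $\evtx$ be a partial automorphism) and condition~\ref{eq:annotated:bag:iso:immage} are consistent with the counting above; these conditions can only reduce the number of valid choices, so any upper bound obtained by ignoring them still applies.
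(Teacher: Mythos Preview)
Your proposal is correct and follows essentially the same approach as the paper: fix the bag $S$ (contributing $n^{O(k)}$ choices), fix the image $\evtx(S)$ (another $n^{O(k)}$ choices, with the bijection absorbed), and then extend to the neighbors vertex-by-vertex using the degree bound (contributing $(\maximumdegree!)^{k+1} = 2^{O(k\maximumdegree\log\maximumdegree)}$). The only cosmetic difference is that the paper phrases the second step as choosing an ordered image of $S$ directly, whereas you split it into choosing the image set and then the bijection; the resulting bound is the same.
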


\begin{definition}[Annotated Tree Decomposition]
\label{definition:AnnotatedTreeDecomposition}
	Let $\annotatedtreedecomposition$ be a term in $\myterms(\annotatedbags(X,\treewidth))$.
	We say that $\annotatedtreedecomposition$ is an {\em annotated width-$\treewidth$ tree decomposition} if the following 
	conditions are satisfied. 
	\begin{enumerate}
	\item $\poormap(\etd)$ is a tree decomposition. 
	\item \label{eq:enrichmed:td:consistency}
		  for each $p \in \positions(\annotatedtreedecomposition)$ with children $p1, \ldots, pd$, and for each 
		$j\in [d]$, the restriction of $\etd(p).\evtx$ to $N[\etd(p).S] \cap N[\etd(pj).S]$ is equal to the restriction 
			of $\etd(pj).\evtx$ to $N[\etd(p).S] \cap N[\etd(pj).S]$.
	\end{enumerate} 
\end{definition}

Intuitively, the first condition states that if we take an annotated tree decomposition $\annotatedtreedecomposition$
and forget annotation then the result is a tree-decomposition of $\agraph$. The second condition guarantees that the 
annotation is consistent along the whole tree decomposition, in the sense that for each vertex $x\in\vertexset(\agraph)$,
if the partial automorphism of one bag sends $x$ to vertex $x'$, then the partial automorphism of each bag sends 
$x$ to $x'$. Each annotated tree decomposition $\etd$ gives rise to a map 
$\eaut(\etd) \colon V(X) \to V(X)$ which sets $\eaut(\etd)|_{N[\etd(p).S]} = \etd(p).\evtx$ for each  $p \in \positions(\etd)$. 
We call the map $\eaut$ the {\em annotation morphism} of $\annotatedtreedecomposition$. The following lemma is the 
main technical tool of this section.

\begin{lemma}
\label{lem:annotation:automorphism}
Let $X$ be an $n$-vertex graph of treewidth $\treewidth$ 
and $\aperm \in \sym_n$. Then, $\aperm$ is an automorphism of $X$ if and only if there exists an annotated tree 
decomposition $\etd$ of $X$ such that $\aperm = \eaut(\etd)$.
\end{lemma}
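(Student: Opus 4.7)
I plan to prove the two directions separately. The forward direction is essentially immediate: given an automorphism $\aperm\in\automorphisms(\agraph)$ and any width-$\treewidth$ tree decomposition $\td$ of $\agraph$, annotate each position $p$ by $\etd(p)\defineequal(\td(p),\aperm|_{N[\td(p)]})$. Condition~(1) of Definition~\ref{definition:AnnotatedBag} holds because a full automorphism maps $N[S]$ bijectively to $N[\aperm(S)]$; condition~(2) holds because the restriction of an automorphism to an induced subgraph is a partial automorphism; and the consistency clause of Definition~\ref{definition:AnnotatedTreeDecomposition} is trivial since every annotation is a restriction of the same global $\aperm$. By construction $\eaut(\etd)=\aperm$.

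For the reverse direction, set $\aperm\defineequal\eaut(\etd)$. I first show $\aperm$ is well defined as a function $V(\agraph)\to V(\agraph)$. Fix $\avertex\in V(\agraph)$ and let $T_{\avertex}\defineequal\{p:\avertex\in N[\etd(p).S]\}$; writing $P_{\anothervertex}\defineequal\{p:\anothervertex\in\etd(p).S\}$ for each $\anothervertex\in V(\agraph)$, the subterm axiom~(T3) makes every $P_{\anothervertex}$ a subtree, and axiom~(T2) applied to each edge $\{\avertex,\anothervertex\}$ ensures $P_{\avertex}\cap P_{\anothervertex}\neq\emptyset$ for every $\anothervertex\in N(\avertex)$. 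Consequently $T_{\avertex}=P_{\avertex}\cup\bigcup_{\anothervertex\in N(\avertex)}P_{\anothervertex}$ is a union of subtrees each meeting $P_{\avertex}$, hence itself a subtree of the tree underlying $\etd$. The consistency clause of Definition~\ref{definition:AnnotatedTreeDecomposition}, applied along the edges of $T_{\avertex}$, forces $\etd(p).\evtx(\avertex)$ to take a single value for every $p\in T_{\avertex}$, so $\aperm(\avertex)$ is unambiguously defined.

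Edge preservation is then immediate: any edge $\{\avertex,\anothervertex\}\in E(\agraph)$ sits inside a common bag $\etd(p).S$ by axiom~(T2), and $\etd(p).\evtx$ maps it to an edge of $\agraph$ by condition~(2) of Definition~\ref{definition:AnnotatedBag}. The main obstacle is bijectivity, which I plan to handle by letting $V'\defineequal\aperm(V(\agraph))=\bigcup_p\etd(p).\evtx(\etd(p).S)$ and showing $V'=V(\agraph)$. The key leverage is condition~(1) of Definition~\ref{definition:AnnotatedBag}: if $z\in V'$ lies in $\etd(p).\evtx(\etd(p).S)$ and $y\in V(\agraph)$ is any neighbor of $z$, then $y\in N[\etd(p).\evtx(\etd(p).S)]=\etd(p).\evtx(N[\etd(p).S])$, so $y=\aperm(\anothervertex)$ for some $\anothervertex\in N[\etd(p).S]$, giving $y\in V'$. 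Thus $V(\agraph)\setminus V'$ is closed under taking neighbors in $\agraph$; since $\agraph$ is connected (Definition~\ref{definition:GraphEmbeddable}) and $V'\neq\emptyset$, we conclude $V'=V(\agraph)$. Surjectivity on a finite set yields bijectivity, and an edge-preserving bijection of a finite graph automatically reflects edges (the induced map on $E(\agraph)$ is an injection into itself, hence a bijection), so $\aperm\in\automorphisms(\agraph)$.

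The delicate point is the propagation argument in the bijectivity step, which exploits condition~(1) of annotated bags rather than only condition~(2). Without (1), a neighbor of an image-bag vertex need not pull back to a preimage under the annotation, so the closure of $V(\agraph)\setminus V'$ under neighbors would no longer follow.
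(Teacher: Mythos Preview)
Your proof is correct and follows essentially the same route as the paper's: restrict $\aperm$ to closed neighborhoods for the forward direction, and for the converse use connectedness together with condition~(1) of Definition~\ref{definition:AnnotatedBag} to propagate surjectivity from one image vertex to all of $V(\agraph)$. Your well-definedness argument (showing $T_{\avertex}$ is a subtree, not just $P_{\avertex}$) is actually more careful than the paper's, and your finite-counting argument for edge reflection is a clean alternative to the paper's explicit use of its Claim~\ref{claim:two:one}, but neither changes the overall strategy.
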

\begin{proof} ({\em Only if direction.}) First, we show that if $\aperm$ is an automorphism of 
$\agraph$ then there is an annotated tree decomposition $\annotatedtreedecomposition$ such that
 $\aperm = \eaut(\etd)$.  Let $\td$ be a width-$\treewidth$ tree decomposition of
$\agraph$ and suppose $\aperm$ is an automorphism. Then, we construct an annotated
tree decomposition $\etd$ with $\poormap(\etd) = \td$ by letting for each $p
\in \positions(\etd)$, $\etd(p).S \defineequal \td(p)$ and $\etd(p).\evtx \defineequal
\aperm|_{\closedneighborhood[\etd(p).S]}$. In other words, the annotation of each bag is simply the restriction of 
$\aperm$ to the closed neighborhood of $\etd(p).S$. Clearly, by construction we have that $\aperm = \eaut(\etd)$.
Therefore, it is enough to show that $\annotatedtreedecomposition$ is an annotated tree decomposition. 
 Since $\aperm$ is an automorphism, one immediately
verifies that for each $p \in \positions(\etd)$, $\evtx_p \defineequal  \etd(p).\evtx$
is an isomorphism from $X[\closedneighborhood[S]]$ to $X[\closedneighborhood[\evtx_p(S)]]$, i.e.~condition
(\ref{eq:annotated:bag:iso}) of Definition \ref{definition:AnnotatedBag} holds.
Condition (\ref{eq:enrichmed:td:consistency}) of
Definition~\ref{definition:AnnotatedTreeDecomposition} is satisfied as well,
since by construction, for any pair of positions $p, p' \in \positions(\etd)$
and $x \in \etd(p).S \cap \etd(p').S$, $\etd(p).\evtx(x) = \aperm(x) =
\etd(p').\evtx(x)$. 
	
	({\em If direction}) Suppose that there is an annotated width-$\treewidth$ tree
decomposition $\etd$ of $X$ such that $\aperm = \eaut(\etd)$. 
	We show that $\aperm$ is an automorphism of $X$. First, we argue that
$\aperm$ is well-defined. Since $\poormap(\etd)$ is a tree decomposition, we
have by~\ref{def:tree:decomposition:vertices} that for each $x \in V(X)$, there
is at least one position $p \in \positions(\etd)$ such that $x \in \etd(p).S$.
By property~\ref{def:tree:decomposition:subterm} of tree decompositions and
condition (\ref{eq:enrichmed:td:consistency}) of Definition~\ref{definition:AnnotatedTreeDecomposition}, 
we can conclude that~$\aperm(x)$ is assigned a unique value, and therefore that $\aperm$ is a well defined function. It
remains to argue that $\aperm$ is surjective, injective, and indeed an automorphism.
\begin{claim}\label{claim:two:one} 
Let $x, x', y' \in V(X)$. If
$\aperm(x) = x'$ and $\{x', y'\} \in E(X)$, then there exists a $y \in V(X)$
such that $\{x, y\} \in E(X)$ and  $\aperm(y) = y'$.  \end{claim}
\begin{claimproof} If $\aperm(x) = x'$, then there exists some position $p \in
\positions(\etd)$ such that $\etd(p).\evtx(x) = x'$. Now, Condition (\ref{eq:annotated:bag:iso})
of Definition \ref{definition:AnnotatedBag} implies that $\etd(p).\evtx|_{N[x]}$ is a bijection from
$N[x]$ to $N[x']$. Since, by assumption, $y' \in N[x']$, there is a vertex $y
\in N[x]$ such that $\etd(p).\evtx(y) = y'$, and hence $\aperm(y) = y'$.
\end{claimproof} 

From Claim~\ref{claim:two:one}, the following claim follows straightforwardly by induction 
on the length of paths. 

\begin{claim}\label{claim:Paths}
Let $x,x',y'\in V(X)$. If $\aperm(x) = x'$ and there exists a path from 
$x'$ to $y'$, then there exists a vertex $y\in V(X)$ such that $\aperm(y) = y'$.
\end{claim}
\begin{claimproof}
The proof is by induction on the length of paths. In the base case, the path has length
$0$. In this case, $x'=y'$ and the claim follows trivially by setting $y=x$. Now, let 
$r\geq 0$, and assume that the claim is true for every path of length at most 
$r$. Let $x'x_1'x_2'....x_r'y'$ be a path of length $r+1$ from $x'$ to $y'$. 
Then, by the induction hypothesis, there exists $x_r\in V(X)$ such that $\aperm(x_r) = x_r'$. 
Now since $\{x_r',y'\}$ is an edge in $E(X)$, by Claim \ref{claim:two:one}, 
we have that there exists a $y\in V(X)$ such that $\aperm(y) = y'$.  
\end{claimproof}

Since $\agraph$ has at least one vertex, we have that there exist vertices 
$x_0, x_0' \in V(X)$ such that $\aperm(x_0) = x_0'$. Now, since $X$ is connected, 
there is a path from $x_0'$ to any other vertex $y$ in $X$. Therefore, from Claim \ref{claim:Paths} and 
from the fact that $\aperm$ is well-defined, we can conclude that $\aperm$ is surjective.
	
Since $\aperm$ is a surjective map whose domain and codomain have the
same size, we can infer that $\aperm$ is injective as well.

	Now, suppose $\{x, y\} \in E(X)$. By property
\ref{def:tree:decomposition:edges} of tree decompositions, there is a position
$p \in \positions(\etd)$ such that $\{x, y\} \subseteq \etd(p).S$. Let $S
\defineequal \etd(p).S$ and $\evtx \defineequal  \etd(p).\evtx$.  By condition
(\ref{eq:annotated:bag:iso}) of Definition \ref{definition:AnnotatedBag}, 
$\evtx$ is an isomorphism from $X[\closedneighborhood[S]]$ to $X[\closedneighborhood[\evtx(S)]]$, 
so we know that $\{\evtx(x), \evtx(y)\} \in E(X)$. Since
$\evtx(x) = \aperm(x)$ and $\evtx(y) = \aperm(y)$, we have that $\{\aperm(x),
\aperm(y)\} \in E(X)$.  On the other hand, if $\{\aperm(x), \aperm(y)\} \in
E(X)$, then by Claim~\ref{claim:two:one}, $\{x, y\} \in E(X)$. 
	
	This concludes the proof	that $\aperm$ is an automorphism.
\end{proof}

\begin{definition}
\label{definition:permutationYield}
A tree decomposition $\treedecomposition$ is called {\em permutation yielding}, if there is a 
bijection $\cLang \colon \leaves(\positions(\treedecomposition)) \to V(X)$ such that for each leaf 
$p \in \leaves(\positions(\treedecomposition))$,  $\treedecomposition(p) = \{\cLang(p)\}$.
\end{definition}

In other words, a tree decomposition $\treedecomposition$ is permutation yielding if each vertex occurs
in precisely one leaf bag. The next lemma shows that any tree decomposition $\treedecomposition$ can be transformed in
polynomial time into a permutation yielding tree decomposition of same width. 
We note that a statement analogous to Lemma~\ref{lemma:yielding:tree:decomposition} can also 
be obtained by observing that tree-decompositions can be converted in polynomial time into branch decompositions 
of roughly the same width \cite{RS91}.  
We include a proof of Lemma \ref{lemma:yielding:tree:decomposition} for completeness. 
\begin{lemma}\label{lemma:yielding:tree:decomposition} Let $X$ be an $n$-vertex
graph, $k \in \{0,\ldots,n-1\}$, and $\treedecomposition$ a width-$\decwidth$
tree decomposition of $X$. Then, one can construct from $\treedecomposition$ in
polynomial time a permutation yielding width-$\decwidth$ tree decomposition.
\end{lemma}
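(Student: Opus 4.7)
The plan is to augment the given tree decomposition $\treedecomposition$ with one fresh singleton leaf per vertex, and then prune the original leaves that become redundant, leaving only the new singletons as leaves.

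First, I would root $\treedecomposition$ at a position $r_0$ with nonempty bag (such a position exists whenever $n \geq 1$). By axiom~\ref{def:tree:decomposition:subterm}, the set of positions whose bag contains a given vertex $v$ forms a subterm of $\treedecomposition$; let $p_v$ denote its root, which can be computed for all $v \in V(X)$ in polynomial time by a single traversal of $\treedecomposition$. I then construct $\treedecomposition'$ by attaching, for each $v \in V(X)$, a new child $\ell_v$ to $p_v$ with bag $\treedecomposition'(\ell_v) \defineequal \{v\}$. The width is preserved since every new bag has size $1 \leq \decwidth + 1$. Axioms~\ref{def:tree:decomposition:vertices} and~\ref{def:tree:decomposition:edges} are inherited from $\treedecomposition$. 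For axiom~\ref{def:tree:decomposition:subterm}, the subterm of positions containing $v$ simply acquires a new leaf $\ell_v$ adjacent to $p_v$ (which already belongs to it), while the subterm corresponding to any $u \neq v$ is unchanged.

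Next, I prune. Iteratively, while there is a leaf $p$ of the current tree with $p \neq \ell_u$ for every $u$ and $\treedecomposition'(p) \subseteq \treedecomposition'(\mathrm{parent}(p))$, I delete $p$. Each deletion preserves all three axioms and the width bound, since every vertex and edge witnessed at $p$ remains witnessed at its parent. The key invariant driving termination is that every non-singleton leaf in the intermediate tree is eligible for removal. Indeed, such a $p$ cannot have any $v$ with $p_v = p$, because otherwise $\ell_v$ would be a child of $p$ (and singletons are never pruned by the rule). Consequently, for every $v \in \treedecomposition'(p)$, the position $p_v$ is a proper ancestor of $p$, so axiom~\ref{def:tree:decomposition:subterm} forces $v \in \treedecomposition'(\mathrm{parent}(p))$, giving the required inclusion. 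The process terminates after at most $|\positions(\treedecomposition)|$ deletions, and the resulting tree has exactly $\{\ell_v : v \in V(X)\}$ as its leaf set. Defining $\cLang(\ell_v) \defineequal v$ then produces the bijection required by Definition~\ref{definition:permutationYield}.

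The main subtlety I expect to handle carefully is ruling out the possibility that the root $r_0$ becomes a stranded leaf during pruning. Rooting at a position with nonempty bag ensures that $r_0$ acquires a singleton child $\ell_v$ for any $v \in \treedecomposition(r_0)$ (since then $p_v = r_0$), so $r_0$ has at least one $\ell_v$ as a child throughout and can never become a leaf. All other verifications --- maintaining the three tree decomposition axioms, preserving the width bound, and bounding the total running time polynomially in $|\positions(\treedecomposition)|$ and $n$ --- are routine consequences of the construction and pruning rules above.
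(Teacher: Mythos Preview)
Your proof is correct and follows the same two-step outline as the paper---attach one singleton leaf per vertex, then trim the tree---but the trimming step differs. The paper selects one singleton leaf per vertex, takes the closest ancestral closure (the minimal subtree spanning those leaves) in one shot, and argues globally that any deleted position's bag is absorbed by its nearest surviving position via axiom~\ref{def:tree:decomposition:subterm}. You instead attach each $\ell_v$ at the \emph{highest} bag $p_v$ containing $v$ and prune leaf by leaf, maintaining the tree-decomposition axioms as an invariant; your key observation---that a surviving non-$\ell_u$ leaf cannot equal any $p_v$, hence its bag lies in its parent's---is a local, inductive version of the paper's absorption argument. Your route avoids the ancestral-closure machinery and makes the invariant explicit, while the paper's is more concise. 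One cosmetic omission: after pruning you should renumber the surviving children at each node to restore well-numberedness of the underlying tree-like set, just as the paper does when passing to the ``induced term''.
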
 
\begin{proof} For each position $p\in
\positions(\treedecomposition)$, let $\arity(p)$ be the number of children of
$\aposition$ in $\positions(\treedecomposition)$.  We can assume that for each
$v \in V(X)$, there is at least one position $\leafposition \in
\leaves(\positions(\treedecomposition))$ such that $\treedecomposition(\leafposition) =
\{v\}$, otherwise we could pick an arbitrary position $p \in
\positions(\treedecomposition)$ with $v \in \treedecomposition(p)$ and add a
leaf $p' = p\cdot (\arity(p)+1)$ and set $\treedecomposition(p') = \{v\}$. 
Hence, we can assume that there is a set
$S \subseteq \leaves(\positions(\treedecomposition))$ containing precisely one
leaf $\leafposition_v \in S$ with $\treedecomposition(\leafposition_v) = \{v\}$ for each $v \in
V(X)$. Let $S^+$ be the closest ancestral closure of $S$ in
$\positions(\treedecomposition)$. We let $\treedecomposition'$ be the
$\treedecomposition$-term induced by the closest ancestral closure of $S$. Note that 
$S = \leaves(\positions(\treedecomposition|_{S^+}))$. 
	
By construction, there is a bijection $\cLang \colon
\leaves(\positions(\treedecomposition|_{S^+})) \to V(X)$ with
$\treedecomposition(p) = \{\cLang(p)\}$ for each $p \in
\leaves(\positions(\treedecomposition|_{S^+}))$. Let $p^* \in
\positions(\treedecomposition) \setminus \positions(\treedecomposition|_{S^+})$
and consider $x, y \in \treedecomposition(p^*)$ (where possibly $x = y$). Let
$q^* \in S^+$ denote the position that is closest to $p^*$ among all positions
in $S^+$.  Since there are leaves $\leafposition_x, \leafposition_y \in S$ such that
$\treedecomposition(\leafposition_x) = \{x\}$ and $\treedecomposition(\leafposition_y) = \{y\}$
we can conclude by Property~\ref{def:tree:decomposition:subterm} of tree
decompositions that $\{x, y\} \subseteq \treedecomposition(q^*)$. We have
argued that $\treedecomposition|_{S^+}$ satisfies
Properties~\ref{def:tree:decomposition:vertices}
and~\ref{def:tree:decomposition:edges} of tree decompositions, and we observe
that property~\ref{def:tree:decomposition:subterm} remains intact on
$\treedecomposition|_{S^+}$ as well. 
	
	Hence, the $\treedecomposition$-term induced by $S^+$ is a permutation
yielding width-$\decwidth$ tree decomposition of $X$ and it is clear that the
above construction can be implemented in polynomial time in the size of $\treedecomposition$. \end{proof}

\newcommand{\vars}{\variables} \newcommand{\iv}{\initialvariable}
\newcommand{\gr}{\grammarrelation} \newcommand{\pr}{\gr}
\newcommand\specialvariable{\mathfrak{b}} \newcommand\svar{\specialvariable}

Let $\etd$ be an annotated tree decomposition with $r$ leaves,
and let $\yield(\etd) = (S_1,\evtx_1) \ldots (S_r,\evtx_r)$ be 
the yield of $\annotatedtreedecomposition$. 
In other words, $\yield(\etd)$ is the sequence
of annotated bags obtained by reading the leaves of
$\annotatedtreedecomposition$ from left to right.  Then we define the {\em
annotation yield} of $\etd$ as the sequence $\eyield(\etd) \defineequal
\evtx_1(S_1) \ldots \evtx_m(S_r)$. Note that if $\td$ is a permutation 
yielding tree decomposition of $\agraph$, and $\etd$ is an 
annotation of $\td$, then $r=|\vertexset(\agraph)|$, and $\eyield(\etd)$ 
is a string of singletons of the form $\{v_1\}\{v_2\}\dots\{v_{r}\}$ where 
$v_i\in \vertexset(\agraph)$ for each $i\in [r]$, and $v_i\neq v_j$ for $i\neq j$.

\begin{retheorem}[\ref{theorem:AutomorphismsTreewidth}]
\label{retheorem:AutomorphismsTreewidth}
Let $\agraph$ be a graph with $n$ vertices, treewidth $\treewidth$ and maximum degree $\maximumdegree$. 
Then $\sgc(\automorphisms(\agraph)) \leq 2^{O(\treewidth \maximumdegree \log \maximumdegree)}\cdot n^{O(\treewidth)}$. 
Additionally, one can construct in time $2^{O(\treewidth \maximumdegree \log \maximumdegree)}\cdot n^{O(\treewidth)}$ a permutation $\alpha$ and 
a context-free grammar $\contextfreegrammar(\agraph)$ generating $\permutedcoordinates(\stringfication(\automorphisms(\agraph)),\apermutation)$.
\end{retheorem}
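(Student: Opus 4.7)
The plan is to construct a context-free grammar $\contextfreegrammar(\agraph)$ whose accepting parse trees are in one-to-one correspondence with annotated tree decompositions of $\agraph$ sharing a common underlying unannotated decomposition, and whose yields encode the associated automorphisms in a fixed permuted order. Once this correspondence is set up, Lemma~\ref{lem:annotation:automorphism} will directly give that the language of $\contextfreegrammar(\agraph)$ equals $\permutedcoordinates(\stringfication(\automorphisms(\agraph)), \aperm)$ for a suitable $\aperm \in \sym_n$.

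First I apply Lemma~\ref{lemma:yielding:tree:decomposition} to obtain a permutation yielding width-$\treewidth$ tree decomposition $\td$ of $\agraph$ together with the bijection $\cLang \colon \leaves(\positions(\td)) \to V(X)$. I then binarize $\td$ by the standard device of inserting duplicates of a node's bag to redistribute many children into a right-leaning binary subtree; this preserves the width, leaves the leaf set intact, and keeps the total number of positions at $O(n)$. Let $p_1, \ldots, p_n$ denote the leaves of the binarized decomposition in left-to-right order, and set $\aperm \in \sym_n$ by $\aperm(i) \defineequal \cLang(p_i)$.

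The grammar $\contextfreegrammar(\agraph)$ has terminal alphabet $[n]$ and one variable $(p, b)$ for each pair of a position $p$ of the binarized $\td$ and an annotated bag $b \in \annotatedbags(\agraph, \treewidth)$ with $b.S = \td(p)$, plus a start variable $\initialvariable$. Its productions are: (i) for each internal position $p$ with children $p_1, p_2$ and each triple $(b, b_1, b_2)$ of annotated bags at these positions satisfying condition~(\ref{eq:enrichmed:td:consistency}) of Definition~\ref{definition:AnnotatedTreeDecomposition} at both parent-child pairs, the rule $(p, b) \to (p_1, b_1)(p_2, b_2)$, with the analogous rule whenever an internal position has only one child; (ii) for each leaf position $p$ with $\td(p) = \{v\}$ and each annotated bag $b$ at $p$, the rule $(p, b) \to b.\evtx(v)$, which emits a single terminal from $[n]$; and (iii) for each annotated bag $b$ at the root position $p_\text{root}$, the rule $\initialvariable \to (p_\text{root}, b)$.

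Since the productions enforce precisely condition~(\ref{eq:enrichmed:td:consistency}) at every edge of the parse tree, the accepting parse trees are in bijection with the annotated tree decompositions $\etd$ satisfying $\poormap(\etd) = \td$. By Lemma~\ref{lem:annotation:automorphism}, these are in bijection with $\automorphisms(\agraph)$, and a parse tree associated with $\etd$, $\eaut(\etd) = \sigma$, has yield $\sigma(\cLang(p_1)) \cdots \sigma(\cLang(p_n)) = \sigma(\aperm(1)) \cdots \sigma(\aperm(n))$, i.e., precisely $\permutedcoordinates(\stringfication(\sigma), \aperm)$. For the size bound, the binarized $\td$ has $O(n)$ positions, each carries at most $|\annotatedbags(\agraph, \treewidth)| = 2^{O(\treewidth \maximumdegree \log \maximumdegree)} \cdot n^{O(\treewidth)}$ variables, and every production has constant arity, so the total size and construction time are both $2^{O(\treewidth \maximumdegree \log \maximumdegree)} \cdot n^{O(\treewidth)}$. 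The main subtlety lies in confirming that local parent-child consistency of the annotated bags chosen by the productions suffices to produce a globally well-defined annotation morphism $\eaut(\etd)$; this is exactly what the "if direction" of Lemma~\ref{lem:annotation:automorphism} establishes via the subterm property~\ref{def:tree:decomposition:subterm} of tree decompositions together with the connectedness of $\agraph$.
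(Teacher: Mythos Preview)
Your proof is correct and follows essentially the same approach as the paper: build a context-free grammar whose variables are pairs $(p,b)$ of a position in a fixed permutation-yielding tree decomposition and an annotated bag at that position, with productions enforcing the consistency condition of Definition~\ref{definition:AnnotatedTreeDecomposition}, leaf rules emitting $b.\evtx(v)$, and correctness via Lemma~\ref{lem:annotation:automorphism}. The only notable difference is that you explicitly binarize the decomposition before building the grammar, whereas the paper keeps the original arities and writes rules $(p,b)\to(p1,b_1)\cdots(pd,b_d)$; your binarization makes the size bound cleaner, since every rule then has constant arity and the count is transparently $O(n)\cdot|\annotatedbags(\agraph,\treewidth)|^{O(1)}$.
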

\begin{proof}
Since the graph $\agraph$ has treewidth $\treewidth$, one can construct in time 
$2^{O(\treewidth)}\cdot n^{O(1)}$ a width $O(\treewidth)$ tree decomposition $\treedecomposition$ of $\agraph$.
Additionally, from Lemma \ref{lemma:yielding:tree:decomposition}, one can assume that 
$\treedecomposition$ is yielding. Let $\yield(\treedecomposition) = \{v_1\}\{v_2\}...\{v_n\}$. Then we let 
$\alpha_{\treedecomposition}$ be the permutation in $\symmetricgroup_n$ with 
$\stringfication(\alpha_{\treedecomposition}) = v_1v_2...v_n$. We set $\alpha = \alpha_{\treedecomposition}^{-1}$. 
Since $\treedecomposition$ can be constructed in time 
$2^{O(\treewidth)}\cdot n^{O(1)}$, so can the permutation $\alpha$. 

We show that from $\treedecomposition$ one can construct 
a context-free grammar $\contextfreegrammar$ accepting the language 
	$\mylang(\contextfreegrammar) = \permutedcoordinates(\stringfy{\automorphisms(\agraph)},\apermutation)$. 
Intuitively, the parse trees accepted by the grammar $\contextfreegrammar$ correspond to annotations 
of $\treedecomposition$, and by Lemma \ref{lem:annotation:automorphism}, 
these annotations correspond to automorphisms of $\agraph$. 
Formally, the grammar $\cfg = (\alphabet, \vars, \gr, \iv)$ is defined as follows. 
We let $\alphabet = V(X) = [n]$ and $\vars = \positions(\treedecomposition)\times (\annotatedbags(X, \treewidth) \cup \{\iv\})$ where $\iv$ is the 
initial variable of $\cfg$.
Recall that $\poormap \colon \annotatedbags(X, \treewidth) \to \binom{V(X)}{\le \treewidth+1}$ is the map that erases the second
coordinate from each annotated bag $b\in \annotatedbags$. The set $\gr$ contains the following rules.

\begin{enumerate}
	\item A rule $B_1\rightarrow (\lambda,b)$ for each annotated bag $b\in\annotatedbags(\agraph,\treewidth)$ such that 
		$\rho(b)= \treedecomposition(\lambda)$. Intuitively, each such $b$ is an annotated bag corresponding
		to the bag at the root of $\treedecomposition$. 
	\item For each non-leaf position $p \in \positions(\td)\setminus \leaves(\positions(\td))$, 
		with children $p1, \ldots, pd$, we have a rule $$(p,b) \to (p1,b_1) (p2, b_2) \ldots (pd,b_d),$$ for 
		each sequence $b,b_1,...,b_d$ of annotated bags in $\annotatedbags(\agraph,\treewidth)$ satisfying the following
		conditions:  	
	\begin{enumerate}[label={(\roman*)}]  
		\item $\poormap(b) = \td(p)$ and for $j \in [d]$, $\poormap(b_j) = \td(pj)$, and 
		\item \label{enum:aut:grammar:consistency} for each $j \in [d]$, $b.\evtx|_{S^*} = b_j.\evtx|_{S^*}$ where $S^* = b.S \cap b_j.S$. \end{enumerate}
	\item A rule $(\aposition,b)\rightarrow j$ for each leaf position $\aposition\in \positions(\treedecomposition)$ with 
		$b.S = \{i\}$ and $b.\evtx(i) = j$. 
\end{enumerate}

These rules defined above ensure that if we take an accepting parse tree $\parsetree$ of $\contextfreegrammar$ and 
remove its root (i.e the variable $B_1$) and its leaves (which are labeled with numbers in $[n]$) then we are left 
with an annotated version $\annotatedtreedecomposition$ of the tree decomposition $\treedecomposition$. By Lemma 
\ref{lem:annotation:automorphism}, $\annotatedtreedecomposition$ is an annotation of $\treedecomposition$ if and only
if the map $\eaut(\annotatedtreedecomposition):\vertexset(\agraph)\rightarrow \vertexset(\agraph)$ is an automorphism 
of $\agraph$. Therefore, since $\stringfication(\eaut(\annotatedtreedecomposition)) = \yield(\parsetree)$, 
we have that $\mylang(\contextfreegrammar)  = \permutedcoordinates(\stringfy{\automorphisms(\agraph)},\apermutation)$. 

Since we can assume that
$\card{\positions(\td)} = \cO(\treewidth n)$ (see e.g.~\cite[Lemma 7.4]{CyganEtAl15}),
and for each bag, there are at most $2^{\cO(k\maximumdegree \log
\maximumdegree)} \cdot n^{\cO(\treewidth )}$ annotations, we have that $\card{\cfg} = 2^{\cO(\treewidth
\maximumdegree \log \maximumdegree)} \cdot n^{\cO(\treewidth)}$, as claimed.  \end{proof}

\begin{reremark}[\ref{remark:AutomorphismsPathwidth}]
\label{reremark:AutomorphismsPathwidth}
If the graph $\agraph$ of Theorem \ref{theorem:AutomorphismsTreewidth} has pathwidth $\pathwidth$, then 
 one may assume that  $\contextfreegrammar(\agraph)$ is a regular grammar. In other words, in this case, 
$\regsgc(\automorphisms(\agraph)) \leq 2^{O(\treewidth\maximumdegree\log\maximumdegree)}\cdot n^{O(\treewidth)}$. 
\end{reremark}
\begin{proof}
Any $n$-vertex graph of pathwidth $\pathwidth$ admits a path decomposition $\mathcal{D}$ in which each vertex is introduced exactly
once. Therefore, if we apply the construction that transforms tree decompositions in yielding tree decompositions 
stated in Lemma \ref{lemma:yielding:tree:decomposition}, then we actually obtain a path decomposition where each node 
has a child of arity $1$ and a child of arity $0$, except for the internal node farthest away from the root, which has two
children of arity $0$ (one of the children of arity $0$ of each bag is a singleton containing the introduced vertex). 
Therefore, when applying the construction in Theorem \ref{theorem:AutomorphismsTreewidth}, the resulting grammar is regular. 
\end{proof}

\begin{retheorem}[\ref{theorem:MainTheoremEmbeddableGroup}] 
\label{retheorem:MainTheoremEmbeddableGroup}
Let $\agroup\subgroup \symmetricgroup_n$, and suppose that $\agroup$ is embeddable on a graph $\agraph$
with $m$ vertices ($m\geq n$), maximum degree $\maximumdegree$, and treewidth $\treewidth$. Then, 
for each $\anotherpermutation\in \symmetricgroup_n$, 
$$\sgc(\anotherpermutation\circ \agroup)\leq 2^{O(\treewidth\maximumdegree\log\maximumdegree)}\cdot m^{O(\treewidth)}.$$
Additionally, given $\agraph$ and $\anotherpermutation$, one can construct in time 
$2^{O(\treewidth\maximumdegree\log\maximumdegree)}\cdot m^{O(\treewidth)}$ a permutation 
$\apermutation\in \symmetricgroup_n$ (depending only on $\agraph$) and a grammar 
$\contextfreegrammar_{\anotherpermutation}$ generating the language $\permutedcoordinates(\stringfication(\anotherpermutation\circ \agroup),\apermutation)$.
\end{retheorem}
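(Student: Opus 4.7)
The plan is to reduce the statement to Theorem~\ref{theorem:AutomorphismsTreewidth} applied to the host graph $\agraph$, and then to post-process the resulting grammar in two easy steps. First I would invoke Theorem~\ref{theorem:AutomorphismsTreewidth} to build, in the target time bound, a permutation $\alpha\in \symmetricgroup_m$ and a context-free grammar $\contextfreegrammar(\agraph)$ of size $2^{O(\treewidth\maximumdegree\log\maximumdegree)}\cdot m^{O(\treewidth)}$ whose language equals $\permutedcoordinates(\stringfication(\automorphisms(\agraph)),\alpha)$. The remaining task is to transform this grammar, without blowing up its size, into one generating $\permutedcoordinates(\stringfication(\anotherpermutation\circ \agroup),\alpha')$ for some $\alpha'\in \symmetricgroup_n$ that depends only on $\agraph$.

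The second step projects the language from $\automorphisms(\agraph)$ down to its restriction $\agroup = \automorphisms(\agraph)|_{[n]}$. The key observation is that the embedding hypothesis forces $[n]$ to be invariant under every $\beta \in \automorphisms(\agraph)$, so in every word $\beta(\alpha(1))\beta(\alpha(2))\cdots \beta(\alpha(m))$ in $\mylang(\contextfreegrammar(\agraph))$ the set of positions $I = \{i \in [m] : \alpha(i)\in [n]\}$ carrying symbols in $[n]$ depends on $\alpha$ alone, not on $\beta$, and has size exactly $n$. Writing $I = \{i_1 < \cdots < i_n\}$ and setting $\alpha'(k) \defineequal \alpha(i_k)$, the substring at positions $I$ equals $\permutedcoordinates(\stringfication(\beta|_{[n]}),\alpha')$. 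I would therefore apply to each production rule of $\contextfreegrammar(\agraph)$ the alphabet homomorphism that fixes every symbol in $[n]$ and maps every symbol in $[m]\setminus [n]$ to the empty symbol $\emptystring$. By closure of context-free languages under homomorphisms, this yields a grammar $\contextfreegrammar'$ of size at most $|\contextfreegrammar(\agraph)|$ that generates exactly $\permutedcoordinates(\stringfication(\agroup),\alpha')$.

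Finally, applying Proposition~\ref{proposition:RenamingProposition} to $\contextfreegrammar'$ with the permutation $\anotherpermutation$ produces, in linear time and at the same size, the desired grammar $\contextfreegrammar_{\anotherpermutation}$ for the language $\permutedcoordinates(\stringfication(\anotherpermutation\circ\agroup),\alpha')$. The main difficulty is not really technical but conceptual: one must verify that the projection in the second step is well-defined in the sense that the positions preserved by the homomorphism depend only on $\alpha$, not on the particular automorphism being encoded. The invariance of $[n]$ under $\automorphisms(\agraph)$ supplies exactly this property, so the construction carries through and the overall size and time bounds match those claimed in the theorem. No single step contributes more than a constant factor, so the bottleneck—both in size and in running time—remains Theorem~\ref{theorem:AutomorphismsTreewidth}.
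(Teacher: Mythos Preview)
Your proposal is correct and follows essentially the same route as the paper: invoke Theorem~\ref{theorem:AutomorphismsTreewidth} on the host graph, erase the symbols in $[m]\setminus[n]$ via a homomorphism, and then relabel via Proposition~\ref{proposition:RenamingProposition} (the paper phrases this last step directly as closure under homomorphism, which is the content of that proposition). Your treatment of the induced permutation $\alpha'\in\symmetricgroup_n$ is in fact slightly more careful than the paper's, which simply writes $\alpha=\alpha'|_{[n]}$ without explicitly arguing that the surviving positions are $\{i:\alpha'(i)\in[n]\}$; your observation that invariance of $[n]$ under $\automorphisms(\agraph)$ makes these positions independent of the automorphism is exactly the point that makes the projection well-defined.
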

\begin{proof}
This is a consequence of Theorem \ref{theorem:AutomorphismsTreewidth}, together with the fact that 
context free grammars are closed under homomorphisms. More precisely, we first construct  in time 
$2^{O(\treewidth\maximumdegree\log\maximumdegree)}\cdot m^{O(\treewidth)}$ a permutation $\apermutation'\in \symmetricgroup_m$, 
and a context-free grammar $\contextfreegrammar'$ such that 
$\mylang(\contextfreegrammar) = \permutedcoordinates(\stringfy{\automorphisms(\agraph)},\apermutation)$. 
Now let $h:[m]\backslash [n]\rightarrow \{\emptysymbol\}$ be the map that sends each number in  $[m]\backslash [n]$ to the empty symbol $\varepsilon$. 
Then using $\contextfreegrammar$, one can construct in time polynomial in $|\contextfreegrammar'|$ a context-free grammar $\contextfreegrammar''$ 
whose language $\mylang(\contextfreegrammar'')$ is the homomorphic image of $\mylang(\contextfreegrammar')$ under $h$. Additionally, 
one may assume that the grammar $\contextfreegrammar''$ has no production rule containing the empty-symbol $\varepsilon$. Let 
$\apermutation = \apermutation'|_{[n]}$ be the permutation in $\symmetricgroup_n$ obtained by restricting $\apermutation'$ to $[n]$.
Note that $\apermutation$ is well defined, since the fact that $\agroup\subgroup \symmetricgroup_n$ is embeddable in $\agraph$ implies 
that $\apermutation'([n]) = [n]$, and therefore that $\apermutation([n])=[n]$. Then we have that the language accepted by $\contextfreegrammar''$ is 
$\mylang(\contextfreegrammar'') = \permutedcoordinates(\stringfication(\automorphisms(\agraph)),\apermutation)$. 

Finally, let $\anotherpermutation:[n]\rightarrow [n]$ be a permutation in $\symmetricgroup_n$. 
Then we can regard $\anotherpermutation$ as a usual map from $[n]$ to $[n]$, and 
using again the fact that context-free languages are closed under homomorphism, we can construct in time 
	$O(|\contextfreegrammar''|)$ a context-free grammar $\contextfreegrammar$ accepting the homomorphic
	image of $\mylang(\contextfreegrammar'')$ under $\anotherpermutation$. This homomorphic image is 
	simply the language  $\permutedcoordinates(\stringfy{\anotherpermutation\circ \agroup},\apermutation)$. 
\end{proof}

\begin{reremark}[\ref{remark:MainTheoremEmbeddableGroupPathwidth}]
\label{reremark:MainTheoremEmbeddableGroupPathwidth}
If the graph $\agraph$ of Theorem \ref{theorem:MainTheoremEmbeddableGroup} has pathwidth $\pathwidth$, then
one may assume that $\contextfreegrammar_{\anotherpermutation}(\agraph)$ is a regular grammar. In other 
words, in this case, $\regsgc(\anotherpermutation\circ \automorphisms(\agraph)) 
\leq 2^{O(\treewidth\maximumdegree\log\maximumdegree)}\cdot m^{O(\treewidth)}$. 
\end{reremark}
\begin{proof}
The proof of this remark follows from Theorem \ref{theorem:MainTheoremEmbeddableGroup} by using an 
argument analogous to the one used in the proof of Remark \ref{remark:AutomorphismsPathwidth}. 
\end{proof}

\section{Polytopes for Permutation Groups}
\label{section:PolytopesForPermutationGroups}

In linear-programming theory, the $n$-permutahedron is the polytope 
$\apolytope(\symmetricgroup_n)$ formed by the convex-hull of the set of
permutations of the set $\{1,\dots,n\}$. It can be shown that to define the 
permutahedron on the $n$-dimensional space, $2^{\Omega(n)}$ constraints are 
required. On the other hand, a celebrated result from Goemans states that 
the $n$-permutahedron has extended formulations with $O(n\log n)$ variables
and constraints \cite{Goemans2015smallest}. 

More generally, given a subgroup $\agroup \subgroup \symmetricgroup_n$, one can define the 
$\agroup$-hedron as the convex-hull of the permutations in $\agroup$. The technique used in 
\cite{Goemans2015smallest} to upper bound the extension complexity of polytope $\apolytope(\symmetricgroup_n)$, 
which is based on the existence of sorting networks of size $O(n\log n)$ \cite{Ajtai1983sorting}, 
has been used to show that polytopes corresponding to certain families of groups have small extension complexity. 
This includes polytopes corresponding to the alternating group \cite{Weltge2012erweiterte}, and to finite reflection 
groups \cite{kaibel2010branched,kaibel2013constructing,Humphreys1992reflection,BenTalNemirovski2001polyhedral}. 
Nevertheless, techniques to prove non-trivial upper bounds on the extension complexity of polytopes associated with
general permutation groups based on structural properties of these groups are still lacking. We note that a trivial upper bound of 
$|\agroup|$ can be obtained from the fact that the extension complexity of a polytope is upper bounded by 
its number of vertices. Nevertheless, $|\agroup|$ may have up to $n!= 2^{\Omega(n\log n)}$ elements. 

In this section, by combining our main theorem (Theorem \ref{theorem:MainTheoremEmbeddableGroup}) 
with a connection established in \cite{Pesant2009} between the grammar complexity of a given 
formal language $\lang\subseteq [n]^r$ (for $n,r\in \Nplus$) and the extension complexity of 
the polytope $\apolytope(\lang)$ associated with $\lang$, we obtain a new approach for proving upper bounds on 
the extension complexity of a general permutation group $\agroup \subgroup \symmetricgroup_n$ 
based on structural parameters of graphs embedding $\agroup$ (Theorem \ref{theorem:EmbeddabilityVsPolytopes}). 
We note that Theorem \ref{theorem:EmbeddabilityVsPolytopes} is more general in the sense that it also can be used to upper bound the extension complexity of 
polytopes associated with cosets of $\agroup$. 

Let $\realvariables$ be a set of real variables. A {\em real vector} over $\realvariables$ is a function 
$\avector:\realvariables\rightarrow \R$. We let $\reals^{\realvariables}$ be the set of all real vectors over $\realvariables$. Given a set 
$\setvectors = \{\avector_1,\dots,\avector_r\}$ of real vectors, the convex-hull of $\setvectors$ is the set 
$$\convexhull(\setvectors) = \{\sum_{i=1}^k \alpha_i \avector_i \;:\; \alpha_i\geq 0, \sum_{i=1}^k\alpha_i = 1\}$$ 
of all convex linear-combinations of vectors in $\setvectors$. A subset $\apolytope\subseteq \reals^{\realvariables}$ is a {\em polytope over $\realvariables$} if 
$\apolytope = \convexhull(\setvectors)$ for some finite set $\setvectors$ of real vectors over $\realvariables$. 
For each such a polytope $\apolytope$, there is a finite 
set $\setinequalities$ of linear inequalities over $\realvariables$ such that $\apolytope$ is the set of vectors in $\reals^{\realvariables}$
which satisfy each inequality in $\setinequalities$. 

Let $\realvariables$ and $\otherrealvariables$ be sets of real variables with 
$\realvariables\cap \otherrealvariables = \emptyset$. We say that a $(\realvariables \cup \otherrealvariables)$-polytope 
$\anotherpolytope$ is an {\em extended formulation} of $\apolytope$ if there exists a linear projection 
$\rho:\reals^{\realvariables\cup \otherrealvariables}\rightarrow \reals^{\realvariables}$
such that $\apolytope = \rho(\anotherpolytope)$. The {\em extension complexity of $\apolytope$}, denoted by $\extensioncomplexity(\apolytope)$,
is defined as the least number of inequalities necessary to define an extended formulation of $\apolytope$. 

For each $n\in \Nplus$, we let $[n]^{\realvariables}$ be the set of real vectors over $\realvariables$ whose coordinates are chosen from the 
set $[n]$. For $r\in \Nplus$, let $w=w_1\dots w_r$ be a string in $[n]^r$, and let $\mathcal{X}_r = \{x_1,...,x_r\}$ be an ordered set of real variables. 
We let $\hat{w}:\realvariables_r\rightarrow [n]$ be the real vector over $\realvariables_r$ which sets $\hat{w}_i = w_i$ for each $i\in [r]$. 
Given a subset $\lang\subseteq [n]^r$, the $\realvariables_r$-polytope associated with $\lang$ is defined as 
$\polytope(\lang) = \convexhull(\{\hat{w}\;:\; w\in \lang\})$. 

The following theorem, proved in \cite{Pesant2009}, relates 
the grammar complexity of a subset $\lang\subseteq [n]$ with the extension complexity of the polytope $\polytope(\lang)$. 

\begin{theorem}[\cite{Pesant2009}]
	\label{theorem:Pesant}
	Let $\grammar$ be a context-free grammar such that $\mylang(\grammar) \subseteq [n]^{r}$ for some $n,r\in \Nplus$. 
	Then the extension complexity of the polytope $\apolytope(\mylang(\grammar))$ is upper bounded by $|\grammar|^{O(1)}$.
	Additionally, a system of inequalities defining $\apolytope(\mylang(\grammar))$ can be constructed in time 
	$|\grammar|^{O(1)}$. 
\end{theorem}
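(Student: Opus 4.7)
The plan is to build, directly from $\grammar$, a polynomial-size extended formulation of $\apolytope(\mylang(\grammar))$ that imitates the dynamic programming used in Cocke--Younger--Kasami parsing. First I would put $\grammar$ into Chomsky normal form with only a polynomial blow-up in $|\grammar|$, so every production is of the form $\avariable \to \nonterminaltwo\nonterminalthree$ or $\avariable \to \asymbol$. For each pair $(\avariable,[i,j])$ with $\avariable\in \variables$ and $1\le i\le j\le r$, introduce a variable $y_{\avariable,i,j}\ge 0$ that intuitively represents the fractional ``mass'' placed on the event that $\avariable$ derives the substring at positions $i,\ldots,j$; for each binary production $\aproductionrule=(\avariable\to \nonterminaltwo\nonterminalthree)$ and split $i\le k<j$, a variable $z_{\aproductionrule,i,k,j}\ge 0$; and for each terminal production $\aproductionrule=(\avariable\to \asymbol)$ and position $i$, a variable $z_{\aproductionrule,i}\ge 0$. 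The projection variables are $x_{\asymbol,i}$ for $\asymbol\in [n]$, $i\in[r]$, defined linearly by $x_{\asymbol,i}=\sum_{\aproductionrule=(\avariable\to \asymbol)} z_{\aproductionrule,i}$.

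The constraints I would impose are $y_{\initialvariable,1,r}=1$; for each interval $[i,j]$ with $i<j$ and variable $\avariable$, the split-consistency constraint $y_{\avariable,i,j}=\sum_{\aproductionrule,k} z_{\aproductionrule,i,k,j}$ where the sum ranges over productions with left-hand side $\avariable$; for $i=j$ the terminal-consistency constraint $y_{\avariable,i,i}=\sum_{\aproductionrule=(\avariable\to \asymbol)} z_{\aproductionrule,i}$; and flow-conservation constraints equating $y_{\nonterminaltwo,i,k}$ with the total $z$-mass in which $(\nonterminaltwo,[i,k])$ appears as a left child, and similarly for right children. The easy direction of the extended-formulation claim is that each $\astring \in \mylang(\grammar)$ admits a parse tree whose $0/1$ indicator vector satisfies every constraint and projects onto $\hat{\astring}$; so by convexity every point of $\apolytope(\mylang(\grammar))$ lifts to a feasible point of the LP.

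The main obstacle, and the technical heart of the argument, is the converse: the polytope defined by these constraints must project \emph{exactly} onto $\apolytope(\mylang(\grammar))$, i.e.~no ``fractional parse trees'' are introduced. I would argue by induction on the interval length $j-i$ that every feasible $(y,z,x)$ can be decomposed as a convex combination of feasible solutions in which every cell $y_{\avariable,i,j}$ concentrates its mass on a single production-and-split choice, hence on a single parse tree. The induction step is clean because in Chomsky normal form the two child intervals $[i,k]$ and $[k+1,j]$ are disjoint, so the flow-conservation constraints decouple the ``sub-LPs'' on the two sides, and mass can be split and recombined independently on each side. This is the standard dynamic-programming-LP template (going back to Martin's extended formulations) adapted from a shortest-path DAG to the AND/OR tree underlying a CNF grammar.

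Finally, the size bound is routine: there are $O(|\variables|\cdot r^2)$ variables $y$, at most $O(|\grammarrelation|\cdot r^2)$ variables $z$ for binary rules, $O(|\grammarrelation|\cdot r)$ for terminal rules, and $O(nr)$ projection variables, with a comparable number of constraints. All of these are bounded by a polynomial in $|\grammar|$, so the extended formulation has complexity $|\grammar|^{O(1)}$, and the whole construction is carried out in time $|\grammar|^{O(1)}$.
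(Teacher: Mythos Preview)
The paper does not prove this theorem; it is imported as a black box from \cite{Pesant2009}, so there is no in-paper argument to compare against. Your reconstruction via a CKY-style LP is the standard route and is essentially what the cited work does (Martin's dynamic-programming-to-LP template specialised to Chomsky-normal-form parsing).

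Two points are worth tightening. First, your flow-conservation constraint is stated ambiguously: read literally, it imposes \emph{two} equalities, $y_{B,i,k}=(\text{mass as a left child})$ and $y_{B,i,k}=(\text{mass as a right child})$, which would rule out the indicator of any parse tree in which $(B,[i,k])$ occurs on only one side. The intended constraint is the single equality $y_{B,i,j}=(\text{left-child inflow})+(\text{right-child inflow})$, i.e.\ inflow equals outflow in the parse hypergraph. Second, your projection lands in the indicator space $(x_{a,i})_{a\in[n],\,i\in[r]}$, whereas $\apolytope(\mylang(\grammar))\subseteq\R^{r}$ has coordinates $\hat w_i=w_i\in[n]$; one further linear map $x_i=\sum_{a\in[n]} a\cdot x_{a,i}$ is needed (together with $\sum_a x_{a,i}=1$, which does follow from flow conservation but should be recorded). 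With these adjustments the peeling argument---extract a parse tree of strictly positive support starting from the root, subtract its scaled indicator, repeat---goes through; acyclicity of the interval hypergraph in CNF (interval length strictly decreases at every production) guarantees termination, and this is a clean way to organise the induction you describe.
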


If $\agroup$ is a subgroup of $\symmetricgroup_n$, and $\anotherpermutation\in \agroup$, then we let 
$$\apolytope(\anotherpermutation\circ \agroup) \defeq \apolytope(\stringfication(\anotherpermutation\circ \agroup))$$ be the 
polytope associated with the coset $\anotherpermutation\circ \agroup$.
The following theorem, which is the main result of this section, follows by a direct combination of
Theorem \ref{theorem:MainTheoremEmbeddableGroup} with Theorems \ref{theorem:Pesant}. 

\begin{theorem}
\label{theorem:EmbeddabilityVsPolytopes}
Let $\agroup\subgroup \symmetricgroup_n$, and suppose that $\agroup$ is embeddable on a graph $\agraph$
with $m$ vertices ($m\geq n$), maximum degree $\maximumdegree$, and treewidth $\treewidth$. Then, 
for each $\anotherpermutation\in \symmetricgroup_n$, the extension complexity of the polytope
$\apolytope(\anotherpermutation\circ \agroup)$ is at most $2^{O(\treewidth\maximumdegree\log\maximumdegree)}\cdot m^{O(\treewidth)}.$
Additionally, given $\agraph$ and $\anotherpermutation$, a system of inequalities defining $\apolytope(\anotherpermutation\circ \agroup)$ 
can be constructed in time $2^{O(\treewidth\maximumdegree\log\maximumdegree)}\cdot m^{O(\treewidth)}$. 
\end{theorem}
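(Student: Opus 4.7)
The plan is to derive the theorem as a straightforward composition of Theorem~\ref{theorem:MainTheoremEmbeddableGroup} with Theorem~\ref{theorem:Pesant}, with one small bookkeeping step to handle the coordinate permutation $\apermutation$ introduced by the grammar construction.

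First I would invoke Theorem~\ref{theorem:MainTheoremEmbeddableGroup} on the input data $(\agroup,\agraph,\anotherpermutation)$. This yields, in time $2^{O(\treewidth\maximumdegree\log\maximumdegree)}\cdot m^{O(\treewidth)}$, a permutation $\apermutation\in\symmetricgroup_n$ (depending only on $\agraph$) and a context-free grammar $\contextfreegrammar_{\anotherpermutation}$ of size $|\contextfreegrammar_{\anotherpermutation}|\le 2^{O(\treewidth\maximumdegree\log\maximumdegree)}\cdot m^{O(\treewidth)}$ satisfying $\mylang(\contextfreegrammar_{\anotherpermutation})=\permutedcoordinates(\stringfication(\anotherpermutation\circ\agroup),\apermutation)\subseteq[n]^{n}$. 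Next I would feed this grammar into Theorem~\ref{theorem:Pesant}. Since the extension complexity of $\apolytope(\mylang(\grammar))$ is upper bounded by $|\grammar|^{O(1)}$ and a system of inequalities defining it can be constructed in time $|\grammar|^{O(1)}$, I obtain, within the claimed time bound, an explicit extended formulation of the polytope $\apolytope(\permutedcoordinates(\stringfication(\anotherpermutation\circ\agroup),\apermutation))$ whose size lies in $2^{O(\treewidth\maximumdegree\log\maximumdegree)}\cdot m^{O(\treewidth)}$.

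The last step is to pass from the polytope of the permuted language back to the polytope $\apolytope(\anotherpermutation\circ\agroup)=\apolytope(\stringfication(\anotherpermutation\circ\agroup))$ that actually appears in the statement. For this I would observe that the map $\reals^n\to\reals^n$ defined by $(x_1,\ldots,x_n)\mapsto(x_{\apermutation(1)},\ldots,x_{\apermutation(n)})$ is a linear bijection sending $\apolytope(\stringfication(\anotherpermutation\circ\agroup))$ onto $\apolytope(\permutedcoordinates(\stringfication(\anotherpermutation\circ\agroup),\apermutation))$. Hence extension complexity is preserved, and an extended formulation for the latter polytope is transformed into one for the former by simply relabelling the original variables according to $\apermutation^{-1}$ in each inequality — an operation executable in linear time in the size of the formulation.

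The only subtlety, and thus what I would treat as the main (minor) obstacle, is the careful verification that the relabelling step is truly just a renaming of coordinates of the ambient space rather than of the extension variables: the slack variables produced by Theorem~\ref{theorem:Pesant} must be left untouched, and the linear projection of the extended formulation must be composed with the coordinate permutation $\apermutation^{-1}$. Once this is spelled out, the size bound and the constructibility bound both follow immediately by summing the running times of the two invoked theorems, concluding the proof.
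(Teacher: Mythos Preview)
Your proposal is correct and follows exactly the route the paper intends: the paper simply states that the theorem ``follows by a direct combination of Theorem~\ref{theorem:MainTheoremEmbeddableGroup} with Theorem~\ref{theorem:Pesant}'' without further detail. Your additional bookkeeping step---undoing the coordinate permutation $\apermutation$ via a linear bijection on $\reals^n$---is a detail the paper leaves implicit, and your treatment of it is sound.
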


\section{Complexity Theoretic Tradeoffs}
\label{section:RelationsComplexityMeasures}

In 1969 Babai and Bouwer showed independently that any subgroup $\agroup$ of $\symmetricgroup_n$ 
can be embedded in a connected graph with $O(n+|G|)$ vertices \cite{Babai1969representation,Bouwer1969section}.
Note that $|G|$ can be as large as $n!$. Classifying which groups can, or cannot, be embedded on 
connected graphs with a much smaller number of vertices is an important problem in algebraic graph theory \cite{Babai1995automorphism}. 
Indeed, constructing an explicit class of graphs with superpolynomial graph embedding complexity is still an 
open problem, although a conjecture of Babai states that the alternating group $\alternatinggroup_n$ has graph embedding
complexity $2^{\Omega(n)}$ \cite{Babai1981abstract}. We note that Liebeck has shown that any graph whose 
automorphism group is {\em isomorphic} to the alternating group $\alternatinggroup_n$ (as an abstract group) 
has an exponential number of vertices \cite{Liebeck1983graphs}. Nevertheless this result does not extend
to the graph embedding setting. 

In this section we use our main theorem to establish a trade-off
between the index of a subgroup $\agroup$ of $\symmetricgroup_n$, and structural parameters of graphs
embedding $\agroup$. In particular, for several classes of graphs $\graphclass$, 
this trade-off can be used to prove lower bounds on the $\graphclass$-embedding complexity 
of subgroups of $\symmetricgroup_n$ of small index (i.e index up to $2^{cn}$ for some small constant $c$).
We start by stating the following immediate observation. 

\begin{observation}
\label{observation:UnionClosure}
Let $\contextfreegrammar_1$ and $\contextfreegrammar_1$ be context-free grammars. 
Then there is a context-free grammar 
$\contextfreegrammar_1\cup \contextfreegrammar_2$ of size 
$O(|\contextfreegrammar_1|+|\contextfreegrammar_2|)$ such that 
$\mylang(\contextfreegrammar_1\cup \contextfreegrammar_2) = \mylang(\contextfreegrammar_1)\cup \mylang(\contextfreegrammar_2)$. 
\end{observation}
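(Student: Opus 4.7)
The plan is to realize $\contextfreegrammar_1 \cup \contextfreegrammar_2$ by the standard textbook construction: take a disjoint union of the two grammars and introduce a fresh initial variable that nondeterministically chooses between them. First I would write $\contextfreegrammar_i = (\alphabet_i, \variables_i, \grammarrelation_i, \initialvariable^{(i)})$ for $i \in \{1,2\}$, and, by renaming variables in one grammar if necessary, ensure that $\variables_1 \cap \variables_2 = \emptyset$. This renaming preserves the size of each grammar, since the size counts only the lengths of production rules weighted by $\log(|\alphabet|+|\variables|)$.

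Next, I would pick a fresh symbol $\startsymbol \notin \variables_1 \cup \variables_2$ and define
\[
\contextfreegrammar_1 \cup \contextfreegrammar_2 \defineequal \bigl(\alphabet_1 \cup \alphabet_2,\; \variables_1 \cup \variables_2 \cup \{\startsymbol\},\; \grammarrelation_1 \cup \grammarrelation_2 \cup \{(\startsymbol,\initialvariable^{(1)}),(\startsymbol,\initialvariable^{(2)})\},\; \startsymbol\bigr).
\]
Correctness is immediate from the inductive definition of parse trees: an accepting parse tree of $\contextfreegrammar_1 \cup \contextfreegrammar_2$ must apply one of the two new rules at its root, which reduces the problem to building an accepting parse tree of one of the two original grammars (the disjointness of $\variables_1$ and $\variables_2$ prevents any cross-interaction between the two rule sets). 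Thus the language generated is exactly $\mylang(\contextfreegrammar_1) \cup \mylang(\contextfreegrammar_2)$.

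For the size bound, the only additions are the two rules $(\startsymbol,\initialvariable^{(1)})$ and $(\startsymbol,\initialvariable^{(2)})$, each of constant length, contributing at most $O(\log(|\alphabet_1 \cup \alphabet_2| + |\variables_1 \cup \variables_2| + 1))$ to the size. Since the logarithmic factor in the combined grammar is at most a constant multiple of the corresponding factors already appearing in $|\contextfreegrammar_1|$ and $|\contextfreegrammar_2|$, summing the sizes yields $|\contextfreegrammar_1 \cup \contextfreegrammar_2| = O(|\contextfreegrammar_1| + |\contextfreegrammar_2|)$. There is essentially no technical obstacle here; the only point worth a moment's care is verifying that the logarithmic weighting in the definition of grammar size behaves additively under disjoint union, which it does since the union only enlarges the alphabet and variable set by a constant factor.
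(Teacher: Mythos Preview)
The paper states this as an immediate observation and gives no proof, and your construction---make the variable sets disjoint, add a fresh start symbol with two unit rules pointing to the old start symbols---is the standard one. The language-correctness argument is fine.

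The size analysis, however, does not go through as written. Your claim that ``the union only enlarges the alphabet and variable set by a constant factor'' is false: the combined symbol count is $|\alphabet_1\cup\alphabet_2|+|\variables_1|+|\variables_2|+1$, which need not be within a constant factor of $|\alphabet_i|+|\variables_i|$ for \emph{each} $i$. Concretely, take $\contextfreegrammar_1$ over a two-element alphabet with many long rules (so its $\log$-weight is $1$ but $\sum(1+|u|)$ is large), and take $\contextfreegrammar_2$ over a large alphabet with only a few rules. In the union every rule of $\contextfreegrammar_1$ gets re-weighted by the large $\log$ factor coming from $\contextfreegrammar_2$, and the total size becomes $\Theta\bigl((|\contextfreegrammar_1|+|\contextfreegrammar_2|)\cdot\log(|\contextfreegrammar_1|+|\contextfreegrammar_2|)\bigr)$ rather than $O(|\contextfreegrammar_1|+|\contextfreegrammar_2|)$. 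So the logarithmic weighting does \emph{not} behave additively under disjoint union in general, contrary to your last sentence. In the paper's sole use of this observation (the proof of Lemma~\ref{lemma:ContextFreeTradeoff}) all grammars being unioned share the alphabet $[n]$ and, by Proposition~\ref{proposition:RenamingProposition}, have identical variable-set sizes, so the $\log$ weight is uniform across them and the issue disappears; but as a standalone claim your size argument has a gap.
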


The next lemma states that the symmetric grammar complexity of a group $\agroup$ is at most the index of a subgroup
$\anothergroup$ in $\agroup$ times the symmetric grammar complexity of $\anothergroup$. 
Recall that if $\agroup$ is a group and $\anothergroup$ is a subgroup of $\agroup$, 
then the index of $\anothergroup$ in $\agroup$ is 
defined as $\indexgroup_{\agroup}(\anothergroup) = \frac{|\agroup|}{|\anothergroup|}$.

\begin{lemma}
\label{lemma:ContextFreeTradeoff}
Let $\anothergroup \subgroup \agroup \subgroup \symmetricgroup_n$. 
Then $ \sgc(\agroup) \leq  \indexgroup_{\agroup}(\anothergroup) \cdot \sgc(\anothergroup)$.
\end{lemma}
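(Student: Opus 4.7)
The plan is to decompose $\agroup$ into left cosets of $\anothergroup$ and use the fact that context-free grammars are effectively closed under union and under the renaming procedure of Proposition \ref{proposition:RenamingProposition}. Let $t = \indexgroup_{\agroup}(\anothergroup) = |\agroup|/|\anothergroup|$, and fix coset representatives $\anotherpermutation_1, \ldots, \anotherpermutation_t \in \agroup$ so that $\agroup = \bigsqcup_{i=1}^{t} \anotherpermutation_i \circ \anothergroup$. Taking $\stringfication$ on both sides and applying a common permutation $\apermutation \in \symmetricgroup_n$ to the coordinates, this gives
\[
\permutedcoordinates(\stringfication(\agroup), \apermutation) \;=\; \bigcup_{i=1}^{t} \permutedcoordinates(\stringfication(\anotherpermutation_i \circ \anothergroup), \apermutation).
\]

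Next, by definition of $\sgc(\anothergroup)$, there exists a permutation $\apermutation \in \symmetricgroup_n$ and a context-free grammar $\contextfreegrammar$ with $|\contextfreegrammar| = \sgc(\anothergroup)$ such that $\mylang(\contextfreegrammar) = \permutedcoordinates(\stringfication(\anothergroup), \apermutation)$. The crucial point is that the {\em same} permutation $\apermutation$ can be reused for every coset: applying Proposition \ref{proposition:RenamingProposition} to each representative $\anotherpermutation_i$, I obtain a context-free grammar $\contextfreegrammar_{\anotherpermutation_i}$ with $|\contextfreegrammar_{\anotherpermutation_i}| = |\contextfreegrammar| = \sgc(\anothergroup)$ whose language is $\permutedcoordinates(\stringfication(\anotherpermutation_i \circ \anothergroup), \apermutation)$.

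Finally, I iterate Observation \ref{observation:UnionClosure} over the $t$ grammars $\contextfreegrammar_{\anotherpermutation_1}, \ldots, \contextfreegrammar_{\anotherpermutation_t}$ to obtain a single context-free grammar $\contextfreegrammar^{*}$ of size
\[
|\contextfreegrammar^*| \;\leq\; \sum_{i=1}^{t} |\contextfreegrammar_{\anotherpermutation_i}| \;=\; t \cdot \sgc(\anothergroup) \;=\; \indexgroup_{\agroup}(\anothergroup) \cdot \sgc(\anothergroup),
\]
whose language is precisely $\permutedcoordinates(\stringfication(\agroup), \apermutation)$. By the definition of $\sgc(\agroup)$, this upper bounds $\sgc(\agroup)$ by $\indexgroup_{\agroup}(\anothergroup) \cdot \sgc(\anothergroup)$, as desired.

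There is no real obstacle here; the only subtlety to watch is that all $t$ grammars must be compatible in the sense of using one common coordinate permutation $\apermutation$, which is exactly what Proposition \ref{proposition:RenamingProposition} provides (it transports a grammar for $\stringfication(\anothergroup)$ permuted by $\apermutation$ into a grammar for $\stringfication(\anotherpermutation_i \circ \anothergroup)$ permuted by the same $\apermutation$, without blowing up the size). Once this alignment is in place, closure under union finishes the proof.
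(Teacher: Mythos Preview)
Your proof is correct and follows essentially the same approach as the paper: decompose $\agroup$ into left cosets of $\anothergroup$, use Proposition~\ref{proposition:RenamingProposition} to obtain a grammar of size $\sgc(\anothergroup)$ for each coset with respect to a common coordinate permutation $\apermutation$, and then take the union via Observation~\ref{observation:UnionClosure}. The emphasis you place on reusing the same $\apermutation$ across all cosets is exactly the point that makes the argument work.
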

\begin{proof}
Let $\cosetrepresentatives$ be a left transversal of $\anothergroup$ in $\agroup$. In other words, 
$\cosetrepresentatives$ contains exactly one element for each coset of $\anothergroup$ in $\agroup$.
Then we have that $|\cosetrepresentatives| = \indexgroup_{\agroup}(\anothergroup)$ and that 
$\agroup =  \bigcup_{\anotherpermutation\in \cosetrepresentatives} \anotherpermutation\circ \anothergroup$.
This implies that 
\begin{equation}
\label{equation:UnionCosets}
\stringfication(\agroup) =  
\bigcup_{\anotherpermutation\in\cosetrepresentatives} \stringfication(\anotherpermutation\circ \anothergroup). 
\end{equation}
From Equation \ref{equation:UnionCosets}, it is immediate that for each permutation $\alpha\in \symmetricgroup_n$,
\begin{equation}
\label{equation:PermutedUnionCosets}
\permutedcoordinates(\stringfication(\agroup),\apermutation) =  
\bigcup_{\anotherpermutation\in\cosetrepresentatives}  \permutedcoordinates(\stringfication(\anotherpermutation\circ \anothergroup),\apermutation).
\end{equation}
Now, let $\sgc(\anothergroup)$ be the symmetric context-free complexity of $\anothergroup$. 
Then, for some permutation $\apermutation\in \symmetricgroup_n$, there exists a context-free 
grammar $\contextfreegrammar$ of size $\sgc(\anothergroup)$ such that 
$\mylang(\contextfreegrammar) =  \permutedcoordinates(\stringfication(\anothergroup),\apermutation)$. 
Therefore, by Proposition \ref{proposition:RenamingProposition}, for each $\anotherpermutation\in\symmetricgroup_n$, 
there is a context-free grammar $\contextfreegrammar_{\anotherpermutation}$ of size $\sgc(\anothergroup)$ 
accepting the language $\permutedcoordinates(\stringfication(\anotherpermutation\circ \anothergroup),\apermutation)$. 
By combining Equation \ref{equation:PermutedUnionCosets} with Observation \ref{observation:UnionClosure}, 
we can infer that there is a context-free grammar $\contextfreegrammar'$ of size at most 
$\sum_{\anotherpermutation\in\cosetrepresentatives} |\contextfreegrammar_{\anotherpermutation}| = \indexgroup_{\agroup}(\anothergroup)\cdot \sgc(\anothergroup)$
	accepting the language $\permutedcoordinates(\stringfication(\agroup),\apermutation)$. 
\end{proof}

It has been shown in \cite{Ellul2004} (Theorem 30) that the language $\stringfy{\symmetricgroup_n}$ 
cannot be represented by context-free grammars of polynomial size. 
Since $\stringfy{\symmetricgroup_n}$ is invariant under permutation of coordinates, i.e., 
$\stringfy{\symmetricgroup_n} = \permutedcoordinates(\stringfy{\symmetricgroup_n},\apermutation)$ 
for any permutation $\permutation\in \symmetricgroup_n$, 
we have that the symmetric context-free complexity of $\stringfy{\symmetricgroup_n}$ is exponential.

\begin{theorem}[Theorem 30 of \cite{Ellul2004}]
\label{theorem:SymmetricGroupLowerBound}
$\sgc(\symmetricgroup_n)\geq 2^{\Omega(n)}$.
\end{theorem}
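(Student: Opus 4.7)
The plan is essentially to reduce the statement to the known exponential lower bound of Ellul, Krawetz, and Shallit on the ordinary context-free grammar complexity of $\stringfication(\symmetricgroup_n)$. The only thing that needs to be checked is that passing to the symmetric variant $\sgc$ — where we are allowed to first permute the coordinates of the language by any $\apermutation \in \symmetricgroup_n$ before asking for a grammar — gains us nothing in this particular case.

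The key observation is that $\stringfication(\symmetricgroup_n)$ is the set of all strings in $[n]^n$ in which each symbol of $[n]$ appears exactly once. This set is obviously invariant under permuting coordinates: if $w = w_1 \dots w_n$ has each element of $[n]$ appearing exactly once and $\apermutation \in \symmetricgroup_n$, then $w_{\apermutation(1)} \dots w_{\apermutation(n)}$ also has each element of $[n]$ appearing exactly once. Hence for every $\apermutation \in \symmetricgroup_n$,
\[
\permutedcoordinates(\stringfication(\symmetricgroup_n), \apermutation) \;=\; \stringfication(\symmetricgroup_n).
\]

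Given this invariance, the minimization in the definition
\[
\sgc(\symmetricgroup_n) \;=\; \min\{|\contextfreegrammar| : \apermutation \in \symmetricgroup_n,\ \contextfreegrammar \in \allgrammars([n]),\ \mylang(\contextfreegrammar) = \permutedcoordinates(\stringfication(\symmetricgroup_n), \apermutation)\}
\]
collapses to the minimum size of a context-free grammar generating $\stringfication(\symmetricgroup_n)$ itself. By Theorem 30 of Ellul, Krawetz, and Shallit~\cite{Ellul2004}, every such grammar has size $2^{\Omega(n)}$, and the claimed lower bound follows immediately.

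There is no real obstacle: the whole content is the invariance observation, and the hard work has already been done in~\cite{Ellul2004}. The only subtlety worth explicitly flagging is that the definition of $\sgc$ in this paper measures grammar size in bits (with a $\log(|\alphabet|+|\variables|)$ factor per production symbol), so one should confirm that the $2^{\Omega(n)}$ bound in~\cite{Ellul2004}, whether stated in terms of number of productions or total symbol count, still yields $2^{\Omega(n)}$ after multiplying by the logarithmic factor — which it does, since logarithmic factors are absorbed into the $\Omega(n)$ in the exponent.
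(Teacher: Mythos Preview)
Your proposal is correct and matches the paper's own justification essentially verbatim: the paper notes immediately before the theorem that $\stringfy{\symmetricgroup_n}$ is invariant under permutation of coordinates, so that $\sgc(\symmetricgroup_n)$ coincides with the ordinary context-free grammar complexity of $\stringfy{\symmetricgroup_n}$, and then invokes Theorem~30 of~\cite{Ellul2004} for the $2^{\Omega(n)}$ bound.
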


Now, by combining Theorem \ref{theorem:SymmetricGroupLowerBound} with Lemma \ref{lemma:ContextFreeTradeoff} (for $\agroup=\symmetricgroup_n$),
we have the following immediate corollary. 

\begin{corollary}
\label{corollary:IndexExponential}
Let $\anothergroup$ be a subgroup of $\symmetricgroup_n$. 
Then 
$$\sgc(\anothergroup) \geq \frac{2^{\Omega(n)}}{\indexgroup_{\symmetricgroup_n}(\anothergroup)}.$$
\end{corollary}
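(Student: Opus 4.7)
The plan is to obtain the statement as a direct arithmetic consequence of the two ingredients stated immediately above it, namely Lemma~\ref{lemma:ContextFreeTradeoff} and Theorem~\ref{theorem:SymmetricGroupLowerBound}. First, I would instantiate Lemma~\ref{lemma:ContextFreeTradeoff} with $\agroup \defineequal \symmetricgroup_n$ and with $\anothergroup$ the given subgroup of $\symmetricgroup_n$. Since $\anothergroup \subgroup \symmetricgroup_n$ by hypothesis, the lemma applies and yields the upper bound
\[
\sgc(\symmetricgroup_n) \;\leq\; \indexgroup_{\symmetricgroup_n}(\anothergroup) \cdot \sgc(\anothergroup).
\]

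Second, I would invoke Theorem~\ref{theorem:SymmetricGroupLowerBound}, which supplies the lower bound $\sgc(\symmetricgroup_n) \geq 2^{\Omega(n)}$, and chain it with the displayed inequality above to obtain
\[
2^{\Omega(n)} \;\leq\; \indexgroup_{\symmetricgroup_n}(\anothergroup) \cdot \sgc(\anothergroup).
\]
Finally, since $\indexgroup_{\symmetricgroup_n}(\anothergroup) = |\symmetricgroup_n|/|\anothergroup|$ is a positive integer (in particular nonzero), I may divide both sides by it to conclude
\[
\sgc(\anothergroup) \;\geq\; \frac{2^{\Omega(n)}}{\indexgroup_{\symmetricgroup_n}(\anothergroup)},
\]
which is exactly the desired statement.

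There is no genuine obstacle here: the entire content of the corollary is packaged into the two cited results. The only small bookkeeping point worth flagging is to make sure that the permutation $\apermutation$ implicit in the definition of $\sgc(\symmetricgroup_n)$ matches the one used when applying Lemma~\ref{lemma:ContextFreeTradeoff}, but this is already handled inside the proof of that lemma (via Proposition~\ref{proposition:RenamingProposition}), so nothing new needs to be verified.
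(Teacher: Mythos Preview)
Your proposal is correct and matches the paper's approach exactly: the paper states the corollary as an immediate consequence of Lemma~\ref{lemma:ContextFreeTradeoff} instantiated with $\agroup=\symmetricgroup_n$ together with Theorem~\ref{theorem:SymmetricGroupLowerBound}, which is precisely the chain of inequalities you wrote out.
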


By combining Theorem \ref{theorem:MainTheoremEmbeddableGroup} with Corollary \ref{corollary:IndexExponential},
we have a trade-off between the size of a group $\anothergroup$, and the number of vertices, the treewidth and the 
maximum degree of a graph embedding $\anothergroup$. Below, we write $\exp_2(x)$ to denote $2^{x}$.  

\begin{theorem}
\label{theorem:TradeoffTreewidth}
There exist positive real constants $c_1,c_2$ and $c_3$ such that for large enough $n$, and 
each subgroup  $\anothergroup$ of $\symmetricgroup_n$, if $\anothergroup$ is embeddable in a graph
with $m$ vertices, maximum degree $\maximumdegree$ and treewidth $\treewidth$, then 
$$m \geq \exp_2\left( \frac{c_1 n - c_2 \treewidth \maximumdegree\log \maximumdegree - c_3\log \indexgroup_{\symmetricgroup_n}(\anothergroup)}{\treewidth}\right).$$ 
\end{theorem}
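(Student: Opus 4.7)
The plan is to combine the upper bound from Theorem~\ref{theorem:MainTheoremEmbeddableGroup} with the lower bound from Corollary~\ref{corollary:IndexExponential} and then solve for $m$ by taking logarithms.

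First I would invoke Corollary~\ref{corollary:IndexExponential}, which gives constants $a > 0$ and $n_0$ such that for every $n \geq n_0$ and every subgroup $\anothergroup \subgroup \symmetricgroup_n$ we have
$$\sgc(\anothergroup) \;\geq\; \frac{2^{a n}}{\indexgroup_{\symmetricgroup_n}(\anothergroup)}.$$
Second, assuming that $\anothergroup$ is embeddable in a connected graph $\agraph$ with $m$ vertices, maximum degree $\maximumdegree$ and treewidth $\treewidth$, Theorem~\ref{theorem:MainTheoremEmbeddableGroup} (applied with $\anotherpermutation$ equal to the identity) gives absolute constants $b_1, b_2 > 0$ such that
$$\sgc(\anothergroup) \;\leq\; 2^{b_1\treewidth\maximumdegree\log\maximumdegree} \cdot m^{b_2 \treewidth}.$$

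Third, I would chain the two inequalities and take $\log_2$ of both sides, yielding
$$a n - \log_2 \indexgroup_{\symmetricgroup_n}(\anothergroup) \;\leq\; b_1 \treewidth\maximumdegree\log\maximumdegree + b_2 \treewidth \log_2 m.$$
Rearranging and dividing by $b_2 \treewidth$ gives
$$\log_2 m \;\geq\; \frac{a n - b_1 \treewidth\maximumdegree\log\maximumdegree - \log_2 \indexgroup_{\symmetricgroup_n}(\anothergroup)}{b_2 \treewidth},$$
which is exactly the claimed bound with $c_1 = a/b_2$, $c_2 = b_1/b_2$, and $c_3 = 1/b_2$. Setting $n \geq n_0$ handles the ``large enough $n$'' clause.

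There is essentially no obstacle here beyond bookkeeping of constants: the result is a direct algebraic consequence of the two ingredients already in place, with the upper bound of Theorem~\ref{theorem:MainTheoremEmbeddableGroup} playing the role of an ``algorithmic'' witness and Corollary~\ref{corollary:IndexExponential} providing an unconditional lower bound coming from the Glaister--Shallit/Ellul--Krawetz--Shallit estimate on $\symmetricgroup_n$. The only point worth being careful about is ensuring the inequality is stated with strict positivity of the numerator implicit (otherwise the bound on $m$ is vacuous), which is the expected regime of interest when $\log \indexgroup_{\symmetricgroup_n}(\anothergroup)$ is small compared to $n$ and $\treewidth\maximumdegree\log\maximumdegree = o(n)$.
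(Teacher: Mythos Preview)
Your proposal is correct and matches the paper's own proof essentially line for line: both combine the lower bound from Corollary~\ref{corollary:IndexExponential} with the upper bound from Theorem~\ref{theorem:MainTheoremEmbeddableGroup}, take logarithms, and solve for $m$, arriving at the same constants $c_1 = a/b_2$, $c_2 = b_1/b_2$, $c_3 = 1/b_2$.
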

\begin{proof}
By Corollary \ref{corollary:IndexExponential}, there is a constant $c_1'$ such that for each sufficiently large $n$, 
	each subgroup $\anothergroup$ of $\symmetricgroup_n$, $$\sgc(\anothergroup)\geq  \frac{2^{c_1' n}}{\indexgroup_{\symmetricgroup_n}(\anothergroup)}.$$ 
By Theorem \ref{theorem:MainTheoremEmbeddableGroup}, there exists constants $c_2'$ and $c_3'$ such that for each large enough $n$, 
if a subgroup $\anothergroup$ of $\symmetricgroup_n$ can be embedded in a graph with $m\geq n$ vertices, maximum degree $\maximumdegree$ and treewidth $\treewidth$
then $$\sgc(\anothergroup) \leq 2^{c_2' \treewidth \maximumdegree\log\maximumdegree}\cdot m^{c_3' \treewidth}.$$ Therefore, we have that for sufficiently 
large $n$, if a subgroup of $\symmetricgroup_n$ can be embedded in a graph with $m$ vertices, maximum degree $\maximumdegree$ and treewidth $\treewidth$, then 
$$m \geq \left( \frac{2^{c_1' n}}{\indexgroup_{\symmetricgroup_n}(\anothergroup)\cdot 2^{c_2' k \maximumdegree\log \maximumdegree}} \right)^{1/c_3'k}.$$
	Therefore, by setting $c_1 = c_1'/c_3'$, $c_2 = c_2'/c_3'$ and $c_3= 1/c_3'$, we have that 
$$m \geq \exp_2\left( \frac{c_1 n - c_2 \treewidth \maximumdegree\log \maximumdegree - c_3\log \indexgroup_{\symmetricgroup_n}(\anothergroup)}{\treewidth}\right).$$ 
\end{proof}

As a corollary of Theorem \ref{theorem:TradeoffTreewidth}, we get the following lower bound stating that 
subgroups of $\symmetricgroup_n$ with small index (i.e. index at most $2^{cn}$ for some small constant $c$) cannot be 
embedded on graphs of treewidth $o(n/\log n)$, maximum degree $o(n/\log n)$ and a polynomial number of vertices. 

\begin{corollary}
\label{corollary:LowerBoundOne}
Let $\graphclass$ be a class of connected graphs of treewidth $o(n/\log n)$ and maximum-degree $o(n/\log n)$. Then there is a function
	$f\in \omega(1)$, and a constant $c\in \mathbb{R}$, such that for each sufficiently large $n$, each
	subgroup $\agroup$ of $\symmetricgroup_n$ of index $\indexgroup_{\symmetricgroup_n}(\agroup)\leq 2^{cn}$
has $\graphclass$-embedding complexity at least $n^{f(n)}$. 
\end{corollary}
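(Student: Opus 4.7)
The plan is to derive the corollary as a direct quantitative consequence of Theorem \ref{theorem:TradeoffTreewidth}. Suppose $\agroup \subgroup \symmetricgroup_n$ is embedded in a graph $X \in \graphclass$ with $m$ vertices, treewidth $\treewidth$, and maximum degree $\maximumdegree$; the theorem yields positive constants $c_1, c_2, c_3$ with
$$m \geq \exp_2\left(\frac{c_1 n - c_2 \treewidth \maximumdegree \log \maximumdegree - c_3 \log \indexgroup_{\symmetricgroup_n}(\agroup)}{\treewidth}\right).$$
The strategy is to choose the corollary's constant $c$ so that the index contribution is swamped by $c_1 n$, and then to use the sublinearity of $\treewidth$ and $\maximumdegree$ (relative to $n/\log n$) to guarantee that the middle term is also dominated, so that the exponent becomes $\Omega(n/\treewidth)$. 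Dividing by $\log n$ then gives an $\omega(1)$ function because $\treewidth = o(n/\log n)$.

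In detail, I would fix $c \defineequal c_1/(4 c_3)$, so that any $\agroup$ with $\indexgroup_{\symmetricgroup_n}(\agroup) \leq 2^{cn}$ satisfies $c_3 \log \indexgroup_{\symmetricgroup_n}(\agroup) \leq c_1 n /4$. Next, using the hypotheses on $\graphclass$, I would show that $c_2 \treewidth \maximumdegree \log \maximumdegree \leq c_1 n/4$ for all sufficiently large $n$; here one notes that $\maximumdegree = o(n/\log n)$ already yields $\maximumdegree \log \maximumdegree = o(n)$, and this combines with $\treewidth = o(n/\log n)$ to make the product small. The numerator in the theorem bound is then at least $c_1 n /2$, so $\log m \geq c_1 n /(2\treewidth)$. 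Letting $\tau(n)$ denote the supremum of the treewidths of graphs from $\graphclass$ arising in embeddings of subgroups of $\symmetricgroup_n$, one has $\tau(n) = o(n/\log n)$, and setting $f(n) \defineequal c_1 n /(2 \tau(n) \log n)$ gives $f \in \omega(1)$ with $\log m \geq f(n) \log n$, hence $m \geq n^{f(n)}$, as desired.

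The main delicate point is the claim that $c_2 \treewidth \maximumdegree \log \maximumdegree$ is dominated by $c_1 n / 4$ for large $n$. Taken separately, the conditions $\treewidth, \maximumdegree = o(n/\log n)$ do not by themselves force the product $\treewidth \maximumdegree \log \maximumdegree$ to be $o(n)$; the worst case, with both parameters close to $n/\log n$, gives a product of roughly $n^2/\log n$. The intended reading is therefore a joint smallness hypothesis on the class $\graphclass$, consistent with the form of the upper bound in Theorem \ref{theorem:MainTheoremEmbeddableGroup}, where $\treewidth$, $\maximumdegree$, and $\log \maximumdegree$ appear multiplied together in the exponent. Once this joint bound is granted, the rest of the argument is straightforward algebra on the trade-off inequality, and the lower bound on the $\graphclass$-embedding complexity follows immediately.
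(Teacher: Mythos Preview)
Your approach is exactly the one the paper intends: the corollary is stated without proof in the paper, merely as a consequence of Theorem~\ref{theorem:TradeoffTreewidth}, and your derivation fills in the arithmetic in the expected way (choose $c$ to absorb the index term, then use the treewidth bound to force $n/\treewidth = \omega(\log n)$).

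You have also correctly identified a genuine subtlety that the paper glosses over: the separate hypotheses $\treewidth = o(n/\log n)$ and $\maximumdegree = o(n/\log n)$ do \emph{not} by themselves force $\treewidth\,\maximumdegree\log\maximumdegree = o(n)$, so the middle term in the exponent of Theorem~\ref{theorem:TradeoffTreewidth} is not automatically dominated. Your resolution --- reading the hypothesis as a joint smallness condition compatible with the exponent $\treewidth\maximumdegree\log\maximumdegree$ appearing in Theorem~\ref{theorem:MainTheoremEmbeddableGroup} --- is the natural fix, and it is fair to say the paper's statement is imprecise on this point rather than that your argument has a gap. The same issue arises in Corollary~\ref{corollary:LowerBoundTwo}.
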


For classes of graphs of treewidth $n^{\varepsilon}$ (for $\varepsilon<1$), and maximum degree $o(n/\log n)$, Theorem 
\ref{theorem:MainTheoremEmbeddableGroup} implies exponential lower bounds on the embedding complexity of groups 
of small index (i.e. index at most $2^{cn}$ for some small constant $c$). 

\begin{corollary}
\label{corollary:LowerBoundTwo}
	Let $\graphclass$ be a class of connected graphs of treewidth $n^{\varepsilon}$ (for $\varepsilon < 1$) and maximum-degree $o(n/\log n)$. 
	Then there exist constants $c,c'\in \mathbb{R}$, such that for each sufficiently large $n$, each subgroup $\agroup$ of 
	$\symmetricgroup_n$ of index $\indexgroup_{\symmetricgroup_n}(\agroup)\leq 2^{cn}$ 
	has $\graphclass$-embedding complexity at least $2^{c'n^{1-\varepsilon}}$. 
\end{corollary}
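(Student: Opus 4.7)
The plan is to obtain this corollary as a direct arithmetic consequence of the trade-off inequality of Theorem \ref{theorem:TradeoffTreewidth}. Let $c_1, c_2, c_3$ be the positive constants furnished by that theorem. My first step would be to pick the constant $c$ of the present corollary to be any value strictly smaller than $c_1/(4c_3)$; then every subgroup $\agroup \subgroup \symmetricgroup_n$ with $\indexgroup_{\symmetricgroup_n}(\agroup) \leq 2^{cn}$ automatically satisfies $c_3 \log \indexgroup_{\symmetricgroup_n}(\agroup) \leq c_1 n/4$.

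Next, I would exploit the hypothesis that $\agraph \in \graphclass$ has $\treewidth \leq n^{\varepsilon}$ and $\maximumdegree = o(n/\log n)$. The latter yields $\log \maximumdegree = O(\log n)$, and hence
\[
  \maximumdegree \log \maximumdegree \;=\; o(n).
\]
Combined with $\treewidth \leq n^{\varepsilon}$, this is used to show
\[
  c_2 \, \treewidth \, \maximumdegree \log \maximumdegree \;\leq\; c_1 n / 4
\]
for all sufficiently large $n$. Plugging the two estimates into Theorem \ref{theorem:TradeoffTreewidth}, the numerator in the exponent is at least $c_1 n / 2$, and dividing by $\treewidth \leq n^{\varepsilon}$ produces an exponent of at least $(c_1/2)\,n^{1-\varepsilon}$, giving $m \geq 2^{c' n^{1-\varepsilon}}$ with $c' = c_1/2$, which is precisely the claimed bound.

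The main obstacle is essentially a careful accounting of the hidden constants in the $o(\cdot)$ notation: one needs to verify, for the specific growth rates $\treewidth = n^{\varepsilon}$ and $\maximumdegree = o(n/\log n)$, that the middle term $c_2 \treewidth \maximumdegree \log \maximumdegree$ is truly absorbed into $c_1 n/4$, and that the threshold on $n$ beyond which this happens does not depend on the specific group $\agroup$. Once this quantitative check is in place, no further combinatorial ingredient is required: the corollary is a direct simplification of the trade-off in Theorem \ref{theorem:TradeoffTreewidth}, exactly analogous in spirit to the derivation of Corollary \ref{corollary:LowerBoundOne} but keeping track of the polynomial-treewidth regime rather than the $o(n/\log n)$-treewidth regime.
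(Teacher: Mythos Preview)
Your overall strategy—specialize Theorem \ref{theorem:TradeoffTreewidth} and simplify the exponent—is precisely what the paper intends (the paper itself gives no proof and treats the corollary as an immediate consequence of that trade-off). However, one arithmetic step fails. From $\maximumdegree = o(n/\log n)$ you correctly deduce $\maximumdegree\log\maximumdegree = o(n)$, but you then assert that multiplying by $\treewidth \le n^{\varepsilon}$ yields $c_2\,\treewidth\,\maximumdegree\log\maximumdegree \le c_1 n/4$ for all large $n$. That does not follow: an $o(n)$ quantity times $n^{\varepsilon}$ is only $o(n^{1+\varepsilon})$, which for any fixed $\varepsilon>0$ is not eventually $O(n)$. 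Concretely, take $\maximumdegree = n^{1-\varepsilon/2}$ (certainly $o(n/\log n)$); then $\treewidth\,\maximumdegree\log\maximumdegree$ is of order $n^{1+\varepsilon/2}\log n$, the numerator in Theorem \ref{theorem:TradeoffTreewidth} becomes negative, and the bound is vacuous.

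If instead you divide through by $\treewidth = n^{\varepsilon}$ first, the exponent reads $(c_1 - c_3 c)\,n^{1-\varepsilon} - c_2\,\maximumdegree\log\maximumdegree$; keeping this at least $c' n^{1-\varepsilon}$ genuinely requires $\maximumdegree\log\maximumdegree = O(n^{1-\varepsilon})$, i.e.\ $\maximumdegree = O(n^{1-\varepsilon}/\log n)$, which is strictly stronger than the stated hypothesis $\maximumdegree = o(n/\log n)$. The paper leaves this point implicit, so either a sharper degree assumption is tacitly intended or some additional argument is needed; in either case your present chain of inequalities does not close as written.
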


In particular, for some small $c,c'\in \R$, the graph embedding complexity of 
subgroups of $\symmetricgroup_n$ of index at most $2^{c'n}$ is lower bounded by 
$2^{c\sqrt{n}}$ for any minor closed class of graphs of maximum degree $o(n\log n)$. Note that 
these classes of graphs have treewidth at most $\sqrt{n}$.

\section{Conclusion and Open Problems}
\label{section:OpenProblems}

In this work, we have established new connections between three complexity measures for permutation groups: 
embedding complexity parameterized by treewidth and maximum-degree, symmetric grammar complexity and 
extension complexity. In particular, we have shown that groups that can be embedded in graphs of small treewidth
and degree have small symmetric grammar complexity and small extension complexity. These results can also be 
used to translate strong lower bounds on the symmetric grammar complexity or on the extension complexity of 
a group $\agroup \subgroup \symmetricgroup_n$ into lower bounds on the embedding complexity of $\agroup$. 
In particular, using this approach, we have shown that subgroups $\agroup \subgroup \symmetricgroup_n$ of 
sufficiently small index have superpolynomial embedding complexity on classes of graphs of treewidth $o(n/\log n)$ and 
maximum degree $o(n/\log n)$.

Below, we state some interesting open problems related to our work. 

\begin{problem}
\label{problem:SomeGroupSuperpolynomial}
Construct an explicit family of groups $\{\agroup_n\}_{n\in \Nplus}$ with superpolynomial graph embedding complexity, that is to 
say, such that $\embeddingcomplexity(\agroup_n) = n^{\Omega(1)}$.
\end{problem}

In particular, it is not known if the graph embedding complexity of the alternating group $\alternatinggroup_n$ is superpolynomial.
Note that the graph embdding complexity of the symmetric group $\symmetricgroup_n$ is $n$, which is witnessed by $K_n$, the complete 
graph with vertex set $\{1,\dots,n\}$. 

\begin{problem}
\label{problem:AlternatingGroupSuperpolynomial}
Does the alternating group $\alternatinggroup_n$ have superpolynomial graph embedding complexity? 
\end{problem}

The $n$-alternahedron polytope $\apolytope(\alternatinggroup_n)$ is the polytope associated with the alternating group $\alternatinggroup_n$.
The technique used in \cite{Goemans2015smallest} to prove an $O(n\log n)$ upper bound on the extension complexity of the $n$-permutahedron 
$\apolytope(\symmetricgroup_n)$ was generalized in \cite{Weltge2012erweiterte} to show that the extension complexity of the 
$n$-alternahedron is $O(n\log n)$. Therefore, if the answer to Problem \ref{problem:AlternatingGroupSuperpolynomial} is positive, the 
alternating group $\alternatinggroup_n$ is also a solution to the following problem. 

\begin{problem}
\label{problem:PolynomialECSuperpolynomialGEC}
Construct a family of groups $\{\agroup_n\}_{n\in \Nplus}$ of polynomial extension complexity and superpolynomial graph embedding complexity. 
\end{problem}

It is also worth noting that no superpolynomial lower bound for the extension complexity of permutation groups has been shown yet. 

\begin{problem}
	\label{problem:ExtensionComplexityLowerBoundGroups}
	Construct an explicit family of groups $\{\agroup_n\}_{n\in \Nplus}$ with superpolynomial extension complexity. 
\end{problem}

We note that the upper bound of $O(n\log n)$ proved in \cite{Goemans2015smallest} and in \cite{Weltge2012erweiterte} on the extension complexity 
of $\symmetricgroup_n$ and $\alternatinggroup_n$ repectively are with respect the representation of permutations as strings of length $n$ 
over the alphabet $[n]$. In the realm of linear programming theory, another 
useful way of representing permutations of the set ${1,\dots,n}$ is as $0/1$-permutation matrices of
dimension $n$. In this case, the polytope associated with a permutation group $\agroup\subgroup \symmetricgroup_n$ is the polytope $\apolytopematrices(\agroup)$ formed 
by the convex hull of permutation matrices corresponding to the elements of $\agroup$. 
In the case of the symmetric group $\symmetricgroup_n$, the polytope $\apolytopematrices(\symmetricgroup_n)$ is 
the well known Birkhoff polytope \cite{Birkhoff1946} that has extension complexity $\Theta(n^2)$.
On the other hand, determining whether the polytope $\apolytopematrices(\alternatinggroup_n)$
has polynomial extension complexity is an important open problem in linear-programming theory. 

\begin{problem}
\label{problem:AlternatingMatrices}
Does the polytope $\apolytopematrices(\alternatinggroup_n)$ have polynomial extension complexity? 
\end{problem}

\medskip
\noindent{\bf Acknowledgements.}
We thank Manuel Aprile, Laszlo Babai, Peter Cameron, Michael Fellows and Samuel Fiorini for valuable comments and suggestions.
We thank Michel Goemans, Kanstantsin Pashkovich and Stefan Weltge for answering some of our questions by email.

\bibliographystyle{abbrv}
\bibliography{references}

\end{document}